\definecolor{darkblue}{rgb}{0,0,0.5}
\newcolumntype{Y}{>{\centering\arraybackslash}X}
\newtheorem{proposition}{Proposition}[section]
\newtheorem{theorem}{Theorem}[section]
\newtheorem{corollary}{Corollary}[section]
\theoremstyle{definition}
\newtheorem{definition}{Definition}[section]
\newtheorem{remark}{Remark}[section]
\newcommand{\reels}{\mathbb{R}}
\newcommand{\entiers}{\mathbb{N}}
\newcommand{\esp}{E}
\newcommand{\var}{V}
\newcommand{\argmax}{\operatorname*{arg \; max}}
\DeclareMathOperator{\e}{e}
\DeclareMathOperator{\dd}{\mbox{}}
\newcommand{\dx}{\dd{}\mathrm{d}x}
\title{\bfseries Comparative analysis and practical applications of cubic transmutations for the Pareto distribution}
\author[a]{Edoh Katchekpele\footnote{ edohkatchekpele@gmail.com}}
\author[b]{Issa Cherif Geraldo\footnote{cherifgera@gmail.com}}
\author[a]{Tchilabalo Abozou Kpanzou\footnote{kpanzout@gmail.com}}
\affil[a]{\footnotesize Laboratoire de Mod\'elisation Math\'ematique et d'Analyse Statistique Décisionnelle (LaMMASD), D\'epartement de Math\'ematiques, Facult\'e des Sciences et Techniques, Universit\'e de Kara, Kara, Togo.}
\affil[b]{\footnotesize Laboratoire d'Analyse, de Mod\'elisations Math\'ematiques et Applications (LAMMA), D\'epartement de Math\'ematiques, Facult\'e des Sciences, Université de Lomé, 1 B.P. 1515 Lom\'e 1, Togo.}
\date{}
\begin{document}
	
\renewcommand{\proofname}{\textbf{\normalshape Proof}}

\vspace{-2cm}
\maketitle

\begin{abstract}
\noindent
Transmutation is a technique for extending classical probability distributions in order to give them more flexibility. In this paper, we are interested in cubic transmutations of the Pareto distribution. We establish a general formula that unifies existing cubic transmutations of the Pareto distribution and facilitates the derivation of new cubic transmutations that have not yet been explored in the literature. We also derive general formulas for the related mathematical properties. Finally, we perform a comparative analysis of the six transmutations existing in the literature using real-world data. The results obtained confirm the flexibility and effectiveness of cubic transmutations in modeling various types of data.\\

\noindent 
\textbf{Keywords} Cubic transmutation; Pareto distribution; Parameter estimation; Real-world data analysis. \\
	
\noindent 
\textbf{AMS Subject Classification (MSC2020)} 60E05, 62E15, 62F10, 62F99, 62P99.
\end{abstract}

\section{Introduction}

\noindent
Cubic transmutations of probability distributions are widely recognized for their ability to increase the flexibility of classical distributions, enabling them to better model complex data. By introducing additional parameters, these transformations enhance the capacity to fit empirical data across various applications, including those involving heavy-tailed distributions. In particular, such methods are used to extend traditional distributions, like the Pareto distribution, to accommodate real-world datasets more effectively. This paper focuses on cubic transmutations applied to the Pareto distribution.\\

\noindent
Several studies have explored different transformations of the Pareto distribution. \cite{rahman-et-al-2020a} investigated the cubic transmuted Pareto distribution, analyzing its statistical properties, parameter estimation techniques, and performance through real data applications. Similarly, \cite{eledum-2020} introduced two generalized cubic transmuted versions of the Exponentiated Pareto-I distribution, demonstrating that one formulation (CTEP-I-G) outperforms the other (CTEP-I-R) in empirical studies. In contrast, \cite{monsef-et-al-2024} presented a three-parameter G-Transmuted Pareto distribution, providing an in-depth analysis of its mathematical properties and reliability characteristics. While their work highlights the effectiveness of transmutation in failure time modeling, it does not specifically address cubic transmutations.\\

\noindent
In a more recent study, \cite{geraldo-et-al-2025} compared various cubic transmutation approaches, particularly emphasizing the extension of classical probability distributions to better model complex data. The study explored relationships between different cubic transmutation methods, proposed modified versions by extending parameter ranges, and applied these modifications to the Pareto distribution as a baseline. However, this paper did not specifically address the six distinct cubic transmutations of the Pareto distribution, which is the focus of the current study.\\

\noindent
The main objective of this paper is to derive a general formula that encompasses all six cubic transmutations of the Pareto distribution, unifying the existing ones and introducing the three that have not yet been explored in the literature. We also perform a comparative analysis of these six transmutations, assessing their performance using real-world data to evaluate their practical applicability in various contexts.\\

\noindent
To achieve these objectives, the paper is structured as follows. Section \ref{sec:review-CT} provides a brief review of the key definitions and recent improvements in cubic transmutation formulas. Section \ref{sec:CTP} introduces a generalized formulation for cubic transmuted Pareto distributions. In Section \ref{sec4}, we discuss the statistical properties of these distributions, including moment calculations and other relevant results. The methodology for parameter estimation is presented in Section \ref{sec5}. Section \ref{sec:real-data} then applies the six cubic transmutations to real-world datasets, comparing their goodness-of-fit and practical relevance. Finally, Section \ref{sec:conclusion} summarizes the key findings and suggests directions for future research.

\section{A brief review on definitions and improvements of cubic transmutation formulas}  
\label{sec:review-CT}  

Let \( x \in \mathbb{R} \), and let \( G(x) \) be a cumulative distribution function (cdf). In general, \( G(x) \) may depend on a real parameter \( \xi \in \mathbb{R} \) or a parameter vector \( \xi \in \mathbb{R}^d \), where \( d \geqslant 2 \). However, for simplicity and without loss of generality, we will omit \( \xi \) in the notation.  

In what follows, we first recall the key definitions related to cubic transmutation formulas. Then, we discuss some recent improvements and refinements that enhance their applicability to probability distributions, particularly in the case of heavy-tailed models such as the Pareto distribution.

\subsection{Existing formulas}

In this subsection, we present six existing cubic transmutation formulas that have been proposed in the literature. These formulas define transformed cumulative distribution functions (cdfs) based on a given baseline cdf \( G(x) \). Each transmutation is characterized by specific parameter constraints that ensure the resulting function remains a valid cdf.  

The cubic transmutation due to \citet{granzotto-et-al-2017} (denoted \( CT_G \)) corresponds to the cdf  
\begin{equation}  
	\label{eq:cdf-granz}  
	F_G(x) = \lambda_1 G(x) + (\lambda_2 - \lambda_1) G^2(x) + (1 - \lambda_2) G^3(x),  
\end{equation}  
where the parameters \( \lambda_1 \) and \( \lambda_2 \) satisfy  
\begin{equation}  
	\label{eq:range-granz}  
	\mathscr{S}_G = \Big\{ (\lambda_1,\lambda_2) : 0 \leqslant \lambda_1 \leqslant 1 \;\; \text{and} \;\; -1 \leqslant \lambda_2 \leqslant 1 \Big\}.  
\end{equation}  

Similarly, the cubic transmutation of \citet{alkadim-mohammed-2017} (denoted \( CT_A \)) is given by the cdf  
\begin{equation}  
	\label{eq:cdf-alkdm}  
	F_A(x) = (1+\lambda) G(x) - 2\lambda G^2(x) + \lambda G^3(x),  
\end{equation}  
where \( -1 \leqslant \lambda \leqslant 1 \).  

\citet{rahman-et-al-2018a} proposed another cubic transmutation formula, whose cdf is  
\begin{equation}  
	\label{eq:cdf-r18a}  
	F_{R18a}(x) = (1 + \lambda_1) G(x) + (\lambda_2 - \lambda_1) G^2(x) - \lambda_2 G^3(x),  
\end{equation}  
with parameters constrained by  
\begin{equation}  
	\label{eq:range-r18a}  
	\mathscr{S}_{R18a} = \Big\{ (\lambda_1,\lambda_2) :-1 \leqslant \lambda_1 \leqslant 1, \;\; -1 \leqslant \lambda_2 \leqslant 1 \;\; \text{and} \;\; -2 \leqslant \lambda_1 + \lambda_2 \leqslant 1 \Big\}.  
\end{equation}  

Building on this approach, \citet{rahman-et-al-2018b} introduced a new transformation of the form  
\begin{equation}  
	\label{eq:cdf-r18b}  
	F_{R18b}(x) = (1 + \lambda_1 + \lambda_2) G(x) - (\lambda_1 + 2\lambda_2) G^2(x) + \lambda_2 G^3(x),  
\end{equation}  
where the parameters are constrained by  
\begin{equation}  
	\label{eq:range-r18b}  
	\mathscr{S}_{R18b} = \Big\{ (\lambda_1,\lambda_2) : -1 \leqslant \lambda_1 \leqslant 1 \; \text{and} \; 0 \leqslant \lambda_2 \leqslant 1 \Big\}.  
\end{equation}  

Later, \citet{rahman-et-al-2019b} proposed another cubic transmutation given by the cdf  
\begin{equation}  
	\label{eq:cdf-r19}  
	F_{R19}(x) = (1 - \lambda) G(x) + 3\lambda G^2(x) - 2\lambda G^3(x),  
\end{equation}  
where \( -1 \leqslant \lambda \leqslant 1 \).  

More recently, \citet{rahman-et-al-2023} developed a modified cubic transmuted family of distributions, whose cdf is given by  
\begin{equation}  
	\label{eq:cdf-r23}  
	F_{R23}(x) = \left[ 1 - \lambda (\eta - 1) \right] G(x) + \lambda (2\eta - 1) G^2(x) - \lambda \eta G^3(x),  
\end{equation}  
where  
\begin{equation}  
	\label{eq:range-r23}  
	(\lambda,\eta) \in \mathscr{S}_{R23} = \Big\{ (\lambda, \eta) : -1 \leqslant \lambda \leqslant 1 \;\; \text{and} \;\; 0 \leqslant \eta \leqslant 2 \Big\}.  
\end{equation}  

While these cubic transmutation formulas provide flexible extensions of the baseline distribution, there is still room for improvement in terms of generalization and practical applicability. In the next section, we explore possible enhancements to these transformations.

\subsection{Improvements of cubic transmuted distributions}
Let \( G(x) \) be the cumulative distribution function (cdf) of a given random variable. \citet{geraldo-et-al-2025} have made theoretical comparisons of different cubic transmutation formulas with baseline cdf \( G(x) \) and proposed modified versions based on new parameter ranges. They also demonstrated that some of these distribution families are subfamilies of others.\\
Their main findings are summarized hereafter.

\begin{list}{$\bullet$}{}
\item
The Granzotto Cubic Transmuted distribution with cdf given by Equation \eqref{eq:cdf-granz} is not always a well-defined distribution when parameters $\lambda_1$ and $\lambda_2$ satisfy Equation \eqref{eq:range-granz}.

The revised set of conditions
	\begin{equation} 
		\label{eq:range-granz-b} 
		\mathscr{S}_{MG} = \Big\{ (\lambda_1,\lambda_2) : 0 \leqslant \lambda_1 \leqslant 3, \; 0 \leqslant \lambda_2 \leqslant 3, \; \text{and} \; 0 \leqslant \lambda_1 + \lambda_2 \leqslant 3 \Big\}.
	\end{equation}
ensures the validity of $CT_G$ leading to the modified Granzotto Cubic Transmutation
denoted \( CT_{MG} \).

	\item 
	The cubic transmuted family $ CT_A$ associated with $G(x)$, is well defined under the extended condition \( -1 \leqslant \lambda \leqslant 3 \) and this revised version is referred to as the modified Al-Kadim Cubic Transmutation and denoted $CT_{M A}$.
	
	\item 
	The cubic transmuted distribution denoted $CT_{R18a}$, proposed by \citet{rahman-et-al-2018a} and originally defined under conditions \eqref{eq:range-r18a}  is well-defined. However, the extended parameter range
	\begin{equation} 
		\label{eq:range-r18a-b}
		\mathscr{S}_{MR18a} = \Big\{ (\lambda_1,\lambda_2) : -1 \leqslant \lambda_1 \leqslant 2, \; -1 \leqslant \lambda_2 \leqslant 2, \; \text{and} \; -2 \leqslant \lambda_1 + \lambda_2 \leqslant 1 \Big\}
	\end{equation}
	ensures a broader applicability and leads to the modified version denoted \( CT_{MR18a} \).
	
	\item Similarly, the cubic transmuted distribution introduced by \citet{rahman-et-al-2018b}, denoted \( CT_{R18b}$ and originally constrained by conditions \eqref{eq:range-r18b} is more flexible under the more general parameter range given by
	\begin{equation} 
		\label{eq:range-r18b-b}
		\mathscr{S}_{MR18b} = \Big\{ (\lambda_1,\lambda_2) : -2 \leqslant \lambda_1 \leqslant 1, \; -2 \leqslant \lambda_2 \leqslant 1, \; \text{and} \; -1 \leqslant \lambda_1 + \lambda_2 \leqslant 2 \Big\},
	\end{equation}
	thus leading to the modified version denoted \( CT_{MR18b} \).
	
	\item The modified families \( CT_{MG} \), \( CT_{MR18a} \), and \( CT_{MR18b} \) have been found to be equivalent.
	
	\item The cubic transmuted distribution introduced by \citet{rahman-et-al-2019b}, originally denoted $CT_{R19}$, remains well defined under the extended range  \( -2 \leqslant \lambda \leqslant 1 \) and this leads to the modified cubic transmutation, denoted \( CT_{MR19} \).
\end{list}

The modifications introduced by \citet{geraldo-et-al-2025} and presented above, extend
the applicability of cubic transmuted distributions by redefining their parameter ranges while preserving their flexibility. Building on these refined formulations, we now introduce a general definition of cubic transmutation that unifies these different approaches.

\subsection{A proposed definition of cubic transmutation}



To the best of our knowledge, and as reported in the literature \citep[see, for example,][]{rahman-et-al-2020b,ali-athar-2021,imliyangba-et-al-2021}, cubic transmutations of a cumulative distribution function (cdf) \( G(x) \) follow the general form:
\begin{equation}
	F(x) = \delta_1 G(x) + \delta_2 G^2(x) + \delta_3 G^3(x),
\end{equation}
where \( \delta = (\delta_1, \delta_2, \delta_3) \in \mathbb{R}^3 \) is a parameter vector, with each \( \delta_i \) (\( i=1,2,3 \)) potentially depending on one or more real parameters. The function \( F(x) \) must also be a valid cdf, meaning that it satisfies the following conditions:  
\begin{subnumcases}{}
	\label{eq:def-cdf-a} \lim_{x \to -\infty} F(x) = 0, \\
	\label{eq:def-cdf-b} \lim_{x \to +\infty} F(x) = 1, \\
	\label{eq:def-cdf-c} \forall x \in \mathbb{R}, \quad F'(x) \geq 0.
\end{subnumcases}

Since \( G(x) \) is itself a cdf, condition \eqref{eq:def-cdf-a} is always satisfied. Additionally, condition \eqref{eq:def-cdf-b} leads to the constraint:
\begin{equation} 
	\delta_1 + \delta_2 + \delta_3 = 1.
\end{equation}

To examine condition \eqref{eq:def-cdf-c}, let \( g(x) = \frac{dG(x)}{dx} \) be the probability density function (pdf) associated with \( G(x) \), and define the pdf of \( F(x) \) as:
$$
	f(x) = \frac{dF(x)}{dx} = g(x) \left[ \delta_1 + 2\delta_2 G(x) + 3\delta_3 G^2(x) \right].
$$

Thus, condition \eqref{eq:def-cdf-c} is equivalent to requiring that the function \( r(t) = \delta_1 + 2\delta_2 t + 3\delta_3 t^2 \), defined for all \( t = G(x) \in [0,1] \), remains non-negative. This observation leads to the following formal definition of cubic transmutation.

\begin{definition}
	\label{def:CT}
	Let $x \in \reels$ and $G(x)$ be a cumulative density function (cdf). A cdf is a cubic transmutation of $G(x)$ if it has the form
	\begin{equation}
	\label{eqCTN}
		F(x) = \delta_1 G(x) + \delta_2 G^2(x) + (1-\delta_1-\delta_2) G^3(x),
	\end{equation}
	where $(\delta_1,\delta_2) \in \reels^2$ is such that 
	\begin{equation}
		\label{eq:delta-1} 
		\forall t \in [0,1], \quad r(t) = \delta_1 + 2\delta_2 t + 3(1-\delta_1-\delta_2) t^2 \geqslant 0
	\end{equation}
	or
	\begin{equation}
		\label{eq:delta-2} 
		\inf_{t \in [0,1]} \Big( \delta_1 + 2\delta_2 t + 3(1-\delta_1-\delta_2) t^2 \Big) \geqslant 0.
	\end{equation}
\end{definition}

%

\begin{remark}
	Note that, in Definition \ref{def:CT}, we have  
	\begin{equation*}
		\int_{0}^{1} r(t) \, dt = \left[ \delta_1 t + \delta_2 t^2 + \delta_3 t^3 \right]_0^1 = 1.
	\end{equation*}
	Thus, the assumption that \( r(t) \) is positive for all \( t \in [0,1] \) is equivalent to stating that \( r(t) \) is a valid probability density function (pdf) on \( [0,1] \). Moreover, if \( R(t) = \delta_1 t + \delta_2 t^2 + \delta_3 t^3 \) is the cdf corresponding to the pdf \( r(t) \), then  
	\begin{equation}
		\label{eq:alzaatreh}
		F(x) = R(G(x)) = \int_{0}^{G(x)} r(t) \, dt.
	\end{equation}
	This corresponds to a special case of a general formula proposed by \citet{alzaatreh-et-al-2013}.
\end{remark}

\section{A generalized formula for the Cubic Transmutation of Pareto distributions}
\label{sec:CTP}

The Pareto distribution is defined by its cdf as 
\begin{equation}
	G(x) = 1 - \left(\frac{x_0}{x} \right)^{\alpha}, \quad x \geqslant x_0,
\end{equation}
and its pdf as
\begin{equation}
	g(x) = \frac{\alpha x_0^{\alpha}}{x^{\alpha+1}}, \quad x \geqslant x_0,
\end{equation}
where $x_0 \in \mathbb{R}_{+}^{*}$ and $\alpha \in \mathbb{R}_{+}^{*}$. 
	
%

Based on the quadratic transmutation proposed by \citet{shaw-buckley-2009}, \citet{merovci-puka-2014} developed the transmuted Pareto (TP) distribution. Several authors have since investigated cubic transmuted Pareto (CTP) distributions.  

Building on the cubic transmutation (CT) formula of \citet{alkadim-mohammed-2017}, \citet{ansari-eledum-2018} introduced a CTP distribution, which we denote as \( CTP_A \). \citet{rahman-et-al-2020a} proposed another CTP distribution, denoted as \( CTP_{R18a} \), based on the formula of \citet{rahman-et-al-2018a}. Similarly, \citet{rahman-et-al-2021} introduced yet another CTP distribution, denoted as \( CTP_{R18b} \), using the formula of \citet{rahman-et-al-2018b}.  

To the best of our knowledge, other CT formulas have not yet been applied to the Pareto distribution. We denote the CTP distributions derived from the formulas of \citet{granzotto-et-al-2017}, \citet{rahman-et-al-2019b}, and \citet{rahman-et-al-2023} as \( CTP_G \), \( CTP_{R19} \), and \( CTP_{R23} \), respectively.  

Furthermore, the modified versions of \( CTP_{A} \), \( CTP_{R18a} \), \( CTP_{R18b} \), \( CTP_{G} \), and \( CTP_{R19} \) will be referred to as \( CTP_{MA} \), \( CTP_{MR18a} \), \( CTP_{MR18b} \), \( CTP_{MG} \), and \( CTP_{MR19} \), respectively.  

The following proposition provides a generalized formula for the cumulative distribution function (cdf) and probability density function (pdf) of the CTP distribution.

\begin{proposition} 
	The cumulative distribution function (cdf) and the probability density function (pdf) of the cubic transmutation of the Pareto distribution are respectively given by
	\begin{equation}
		\label{eq:cdf-CTP}
		F(x) = 1 + (2\delta_1 + \delta_2 - 3) \left( \frac{x_0}{x} \right)^{\alpha} + (3 - 3\delta_1 - 2\delta_2) \left( \frac{x_0}{x} \right)^{2\alpha} + (\delta_1 + \delta_2 - 1) \left( \frac{x_0}{x} \right)^{3\alpha}
	\end{equation}
	and
	\begin{equation}
		\label{eq:pdf-CTP}
		f(x) = \frac{\alpha x_0^{\alpha}}{x^{\alpha+1}} \left[ (3 - 2\delta_1 - \delta_2) + (6\delta_1 + 4\delta_2 - 6) \left( \frac{x_0}{x} \right)^{\alpha} + (3 - 3\delta_1 - 3\delta_2) \left( \frac{x_0}{x} \right)^{2\alpha} \right],
	\end{equation}
	where $x_0 > 0$, $x \geqslant x_0$, $\alpha > 0$ and $\delta_1$ and $\delta_2$ satisfy the conditions \eqref{eq:delta-1} or \eqref{eq:delta-2} so that $f$ is indeed a true pdf.
\end{proposition}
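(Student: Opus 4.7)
The plan is to substitute the Pareto cdf $G(x) = 1 - (x_0/x)^\alpha$ into the general cubic transmutation formula from Definition \ref{def:CT} and simplify. To keep the bookkeeping clean, I would introduce the shorthand $u := (x_0/x)^\alpha$, so that $G(x) = 1 - u$, and expand by the binomial theorem:
\[
G^2(x) = 1 - 2u + u^2, \qquad G^3(x) = 1 - 3u + 3u^2 - u^3.
\]
Plugging these into $F(x) = \delta_1 G(x) + \delta_2 G^2(x) + (1 - \delta_1 - \delta_2) G^3(x)$ and grouping by powers of $u$ writes $F(x)$ as a polynomial of the form $A_0 + A_1 u + A_2 u^2 + A_3 u^3$. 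The constant term collapses to $A_0 = \delta_1 + \delta_2 + (1 - \delta_1 - \delta_2) = 1$, matching $\lim_{x \to \infty} F(x) = 1$; the remaining coefficients work out to $A_1 = 2\delta_1 + \delta_2 - 3$, $A_2 = 3 - 3\delta_1 - 2\delta_2$, and $A_3 = \delta_1 + \delta_2 - 1$, which are precisely the coefficients of \eqref{eq:cdf-CTP}.

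For the pdf, I would differentiate term by term, using $\frac{\mathrm{d}}{\mathrm{d}x}(x_0/x)^{k\alpha} = -k\alpha\, x_0^{k\alpha}/x^{k\alpha+1}$ for $k = 1,2,3$ and factoring out the common prefactor $\alpha x_0^\alpha/x^{\alpha+1}$. This yields
\[
f(x) = \frac{\alpha x_0^\alpha}{x^{\alpha+1}} \bigl[-A_1 - 2 A_2 u - 3 A_3 u^2\bigr],
\]
and substituting back the values of $A_1$, $A_2$, $A_3$ produces the three coefficients $3 - 2\delta_1 - \delta_2$, $6\delta_1 + 4\delta_2 - 6$, and $3 - 3\delta_1 - 3\delta_2$ appearing in \eqref{eq:pdf-CTP}. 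As an independent cross-check, I would also expand $f(x) = g(x)\bigl[\delta_1 + 2\delta_2 G(x) + 3(1 - \delta_1 - \delta_2) G^2(x)\bigr]$ — the identity derived just before Definition \ref{def:CT} — with $g(x) = \alpha x_0^\alpha/x^{\alpha+1}$, and verify that the two expansions agree. That $f$ is a genuine pdf then follows directly from the hypotheses on $(\delta_1, \delta_2)$: conditions \eqref{eq:delta-1}--\eqref{eq:delta-2} are designed precisely to force the bracketed polynomial in $G(x)$ to be non-negative on $[0,1]$.

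There is no real conceptual obstacle here; the argument reduces to pure substitution and collection of like powers of $u$. The only place I would expect to slip is in the arithmetic when reading off $A_1$, $A_2$, $A_3$, so I would run two quick sanity checks. First, that $A_1 + A_2 + A_3 = -1$, which is equivalent to $F(x_0) = 0$ and follows automatically from $\delta_1 + \delta_2 + (1 - \delta_1 - \delta_2) = 1$. Second, that the pdf coefficients obtained by differentiating $F$ coincide with those obtained from the alternative expansion via $g(x)$.
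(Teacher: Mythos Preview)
Your proposal is correct and follows essentially the same route as the paper's proof: substitute $G(x)=1-(x_0/x)^\alpha$ into the cubic transmutation formula, expand the powers, collect coefficients of $(x_0/x)^{k\alpha}$, and then differentiate. Your shorthand $u=(x_0/x)^\alpha$ and the sanity checks are nice bookkeeping devices, but the underlying argument is identical.
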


\begin{proof}
	Let $F$ be the cdf and $f$ the pdf of any cubic transmutation of Pareto distribution. We have, using Equation \eqref{eqCTN},
	\begin{align*}
		F(x) & =  \delta_1 \left( 1 - \left( \frac{x_0}{x} \right)^{\alpha} \right) + \delta_2 \left( 1 - \left( \frac{x_0}{x} \right)^{\alpha} \right)^2 + (1-\delta_1-\delta_2) \left( 1 - \left( \frac{x_0}{x} \right)^{\alpha} \right)^3 \\
		& = \delta_1 \left( 1 - \left( \frac{x_0}{x} \right)^{\alpha} \right) + \delta_2 \left( 1 - 2\left( \frac{x_0}{x} \right)^{\alpha} + \left( \frac{x_0}{x} \right)^{2\alpha} \right) \\
		& \qquad\qquad + (1-\delta_1-\delta_2) \left( 1 - 3\left( \frac{x_0}{x} \right)^{\alpha} + 3\left( \frac{x_0}{x} \right)^{2\alpha} - \left( \frac{x_0}{x} \right)^{3\alpha}\right) \\
		& = \delta_1 - \delta_1 \left( \frac{x_0}{x} \right)^{\alpha} + \delta_2 - 2\delta_2 \left( \frac{x_0}{x} \right)^{\alpha} + \delta_2 \left( \frac{x_0}{x} \right)^{2\alpha} \\
		& \quad + (1-\delta_1-\delta_2) - 3(1-\delta_1-\delta_2) \left( \frac{x_0}{x} \right)^{\alpha} + 3(1-\delta_1-\delta_2) \left( \frac{x_0}{x} \right)^{2\alpha} - (1-\delta_1-\delta_2) \left( \frac{x_0}{x} \right)^{3\alpha} \\
		& = (\delta_1 + \delta_2 + 1-\delta_1-\delta_2) + (- \delta_1 - 2\delta_2 - 3 + 3\delta_1 + 3\delta_2) \left( \frac{x_0}{x} \right)^{\alpha} \\
		& \qquad\qquad + (\delta_2 + 3 - 3\delta_1 - 3\delta_2) \left( \frac{x_0}{x} \right)^{2\alpha} - (1-\delta_1-\delta_2) \left( \frac{x_0}{x} \right)^{3\alpha} \\
		& = 1 + (2\delta_1 + \delta_2 - 3) \left( \frac{x_0}{x} \right)^{\alpha} + (3 - 3\delta_1 - 2\delta_2) \left( \frac{x_0}{x} \right)^{2\alpha} + (\delta_1 + \delta_2 - 1) \left( \frac{x_0}{x} \right)^{3\alpha}.
	\end{align*}
	Since $f(x) = F'(x)$, we also have
	\begin{align*}
		f(x) & = (3 - 2\delta_1 - \delta_2) \frac{\alpha x_0}{x^2} \left( \frac{x_0}{x} \right)^{\alpha-1} + (6\delta_1 + 4\delta_2 - 6) \frac{\alpha x_0}{x^2} \left( \frac{x_0}{x} \right)^{2\alpha-1} \\ 
		& \qquad\qquad + (3 - 3\delta_1 - 3\delta_2) \frac{\alpha x_0}{x^2} \left( \frac{x_0}{x} \right)^{3\alpha-1} \\
		f(x) & = \frac{\alpha x_0^{\alpha}}{x^{\alpha+1}} \left[ (3 - 2\delta_1 - \delta_2) + (6\delta_1 + 4\delta_2 - 6) \left( \frac{x_0}{x} \right)^{\alpha} + (3 - 3\delta_1 - 3\delta_2) \left( \frac{x_0}{x} \right)^{2\alpha} \right].
	\end{align*}
\end{proof} 

\begin{remark}
	\label{rem:rem-CTP1}
By applying Equations \eqref{eq:cdf-CTP} and \eqref{eq:pdf-CTP} for different values of \( \delta_1 \) and \( \delta_2 \), we can verify the cdf and pdf of the cubic-transmuted Pareto distributions found in the literature. 

		\begin{list}{$\bullet$}{}
		\item Setting $\delta_1 = 1+\lambda$ and $\delta_2 = -2\lambda$ yields the cdf and the pdf of $CTP_A$ \citep{ansari-eledum-2018} and its modification $CTP_{MA}$ proposed in this paper:
		\begin{equation}
			F_A(x) = 1 - \left( \frac{x_0}{x} \right)^{\alpha} + \lambda \left( \frac{x_0}{x} \right)^{2\alpha} - \lambda \left( \frac{x_0}{x} \right)^{3\alpha}
		\end{equation}
		and
		\begin{equation}
			f_A(x) = \frac{\alpha x_0^{\alpha}}{x^{\alpha + 1}} \left[ 1 - 2\lambda \left( \frac{x_0}{x} \right)^{\alpha} + 3\lambda \left( \frac{x_0}{x} \right)^{2\alpha} \right].
		\end{equation}
		
		\item By setting $\delta_1 = 1+\lambda_1$ and $\delta_2 = \lambda_2 - \lambda_1$, we respectively obtain the cdf and the pdf of $CTP_{R18a}$ \citep{rahman-et-al-2020a} and its modification $CTP_{MR18a}$:
		\begin{equation}
			F_{R18a}(x) = 1 + (\lambda_1 + \lambda_2 - 1) \left( \frac{x_0}{x} \right)^{\alpha} - (\lambda_1 + 2\lambda_2) \left( \frac{x_0}{x} \right)^{2\alpha} + \lambda_2 \left( \frac{x_0}{x} \right)^{3\alpha}
		\end{equation}
		and
		\begin{equation}
			f_{R18a}(x) = \frac{\alpha x_0^{\alpha}}{x^{\alpha + 1}} \left[ (1 - \lambda_1 - \lambda_2) + 2(\lambda_1 + 2\lambda_2) \left( \frac{x_0}{x} \right)^{\alpha} - 3\lambda_2 \left( \frac{x_0}{x} \right)^{2\alpha} \right].
		\end{equation}
		
		\item Setting $\delta_1 = 1 + \lambda_1 + \lambda_2$ and $\delta_2 = -\lambda_1 - 2\lambda_2$, we derive the cdf and pdf of $CTP_{R18b}$  \citep{rahman-et-al-2021} and its modified version $CTP_{MR18b}$: 
		\begin{equation}
			F_{R18b}(x) = 1 + (\lambda_1 - 1) \left( \frac{x_0}{x} \right)^{\alpha} + (\lambda_2 - \lambda_1) \left( \frac{x_0}{x} \right)^{2\alpha} - \lambda_2 \left( \frac{x_0}{x} \right)^{3\alpha}
		\end{equation}
		and
		\begin{equation}
			f_{R18b}(x) = \frac{\alpha x_0^{\alpha}}{x^{\alpha + 1}} \left[ (1 - \lambda_1) + 2(\lambda_1 - \lambda_2) \left( \frac{x_0}{x} \right)^{\alpha} + 3\lambda_2 \left( \frac{x_0}{x} \right)^{2\alpha} \right].
		\end{equation}
	\end{list}
\end{remark}
%

The following corollaries provide the respective cdf and pdf of the CTP, which are novel and have not been addressed in the literature previously.

\begin{corollary}
	For $\delta_1 = \lambda_1$ and $\delta_2 = \lambda_2 - \lambda_1$, we respectively get the cdf and the pdf of both $CTP_G$ and $CTP_{MG}$: 
	\begin{equation}
		F_G(x) = 1 + (\lambda_1 + \lambda_2 - 3) \left( \frac{x_0}{x} \right)^{\alpha} + (3 - \lambda_1 - 2\lambda_2) \left( \frac{x_0}{x} \right)^{2\alpha} + (\lambda_2 - 1) \left( \frac{x_0}{x} \right)^{3\alpha}
	\end{equation}
	and
	\begin{equation}
		f_G(x) = \frac{\alpha x_0^{\alpha}}{x^{\alpha + 1}} \left[ (3 - \lambda_1 - \lambda_2) + 2(\lambda_1 + 2\lambda_2 - 3) \left( \frac{x_0}{x} \right)^{\alpha} + 3(1 - \lambda_2) \left( \frac{x_0}{x} \right)^{2\alpha} \right].
	\end{equation}
\end{corollary}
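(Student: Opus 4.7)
The plan is to obtain both identities as a direct specialization of the preceding proposition. Since the corollary asserts only that a particular choice of $(\delta_1,\delta_2)$ converts the generalized CTP formulas \eqref{eq:cdf-CTP} and \eqref{eq:pdf-CTP} into the stated expressions, the argument is entirely computational: I would substitute $\delta_1 = \lambda_1$ and $\delta_2 = \lambda_2 - \lambda_1$ into each of the three coefficients appearing in the cdf, and into each of the three coefficients appearing in the pdf, and simplify.

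Concretely, for the cdf I would compute $2\delta_1 + \delta_2 - 3 = \lambda_1 + \lambda_2 - 3$, then $3 - 3\delta_1 - 2\delta_2 = 3 - \lambda_1 - 2\lambda_2$, and finally $\delta_1 + \delta_2 - 1 = \lambda_2 - 1$, which yields exactly $F_G(x)$. For the pdf I would likewise simplify $3 - 2\delta_1 - \delta_2 = 3 - \lambda_1 - \lambda_2$, then $6\delta_1 + 4\delta_2 - 6 = 2(\lambda_1 + 2\lambda_2 - 3)$, and $3 - 3\delta_1 - 3\delta_2 = 3(1-\lambda_2)$, recovering $f_G(x)$.

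Before doing the substitution, I would briefly justify the choice $\delta_1 = \lambda_1$, $\delta_2 = \lambda_2 - \lambda_1$ by matching the coefficients of $G(x)$ and $G^2(x)$ in the general cubic transmutation \eqref{eqCTN} against the Granzotto formula \eqref{eq:cdf-granz}; the coefficient of $G^3(x)$ then follows automatically since $1-\delta_1-\delta_2 = 1-\lambda_2$ matches $(1-\lambda_2)$ in \eqref{eq:cdf-granz}. I would also remark that the resulting pair $(F_G,f_G)$ simultaneously covers $CTP_G$ and $CTP_{MG}$, because these two families share the very same functional form and differ only in their admissible parameter ranges, namely \eqref{eq:range-granz} for $CTP_G$ and \eqref{eq:range-granz-b} for $CTP_{MG}$.

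There is no real analytical obstacle: the statement is a routine corollary of the proposition. The only point requiring any care is bookkeeping of signs when expanding the coefficients, particularly the $\delta_2 = \lambda_2 - \lambda_1$ substitution in the middle coefficients $3 - 3\delta_1 - 2\delta_2$ and $6\delta_1 + 4\delta_2 - 6$, where a sign error would produce the wrong polynomial in $(x_0/x)^{\alpha}$.
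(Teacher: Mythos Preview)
Your proposal is correct and is exactly the intended argument: the paper states this result as a corollary without proof, so the only work is the direct substitution of $\delta_1=\lambda_1$, $\delta_2=\lambda_2-\lambda_1$ into Equations \eqref{eq:cdf-CTP} and \eqref{eq:pdf-CTP}, which you carry out accurately.
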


\begin{corollary} 
	Setting $\delta_1 = 1-\lambda$ and $\delta_2 = 3\lambda$, we respectively get the cdf and pdf of the $CTP_{R19}$ and its modified version $CTP_{MR19}$:
	\begin{equation}
		F_{R19}(x) = 1 + (\lambda - 1) \left( \frac{x_0}{x} \right)^{\alpha} - 3\lambda \left( \frac{x_0}{x} \right)^{2\alpha} + 2\lambda \left( \frac{x_0}{x} \right)^{3\alpha}
	\end{equation}
	and
	\begin{equation}
		f_{R19}(x) = \frac{\alpha x_0^{\alpha}}{x^{\alpha + 1}} \left[ (1-\lambda) + 6\lambda \left( \frac{x_0}{x} \right)^{\alpha} - 6\lambda \left( \frac{x_0}{x} \right)^{2\alpha} \right].
	\end{equation}
\end{corollary}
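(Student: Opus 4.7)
The plan is to apply the preceding proposition directly: substitute $\delta_1 = 1-\lambda$ and $\delta_2 = 3\lambda$ into the generalized cdf \eqref{eq:cdf-CTP} and pdf \eqref{eq:pdf-CTP}, then simplify the resulting expressions term by term. Before starting the computation, a quick consistency check confirms that this substitution is the correct one: rewriting the original formula \eqref{eq:cdf-r19} of \citet{rahman-et-al-2019b} in the normalized form $\delta_1 G(x) + \delta_2 G^2(x) + (1-\delta_1-\delta_2) G^3(x)$ of Definition \ref{def:CT} forces $\delta_1 = 1-\lambda$ and $\delta_2 = 3\lambda$, and then automatically $1-\delta_1-\delta_2 = -2\lambda$, which matches the coefficient of $G^3(x)$ in \eqref{eq:cdf-r19}.

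For the cdf part I would simplify the three coefficients appearing as factors of $(x_0/x)^{\alpha}$, $(x_0/x)^{2\alpha}$ and $(x_0/x)^{3\alpha}$ in \eqref{eq:cdf-CTP}: one gets $2\delta_1 + \delta_2 - 3 = \lambda - 1$, then $3 - 3\delta_1 - 2\delta_2 = -3\lambda$, and finally $\delta_1 + \delta_2 - 1 = 2\lambda$, producing exactly the stated expression for $F_{R19}(x)$. For the pdf part the same substitution in \eqref{eq:pdf-CTP} gives $3 - 2\delta_1 - \delta_2 = 1-\lambda$, $6\delta_1 + 4\delta_2 - 6 = 6\lambda$ and $3 - 3\delta_1 - 3\delta_2 = -6\lambda$, yielding the stated formula for $f_{R19}(x)$.

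There is no real obstacle here: the argument is a routine algebraic substitution, and the only thing to watch is careful sign bookkeeping in each of the six coefficients. No separate verification of the non-negativity condition \eqref{eq:delta-1} is needed at this stage, since the validity of the cubic transmutation is inherited from the general framework of Definition \ref{def:CT} and the admissible ranges $-1 \leqslant \lambda \leqslant 1$ for $CTP_{R19}$ and $-2 \leqslant \lambda \leqslant 1$ for the modified version $CTP_{MR19}$ were already established by \citet{geraldo-et-al-2025}.
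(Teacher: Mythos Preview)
Your proposal is correct and follows exactly the approach the paper intends: the corollary is stated without proof because it is obtained by direct substitution of $\delta_1 = 1-\lambda$, $\delta_2 = 3\lambda$ into the general formulas \eqref{eq:cdf-CTP} and \eqref{eq:pdf-CTP}, just as in Remark~\ref{rem:rem-CTP1}. Your six coefficient computations are all accurate, and the additional consistency check against \eqref{eq:cdf-r19} is a nice touch.
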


\begin{corollary} 	
	Setting $\delta_1 = 1 + \lambda - \lambda \eta$ and $\delta_2 = 2 \lambda \eta - \lambda$, we respectively get the cdf and pdf of the $CTP_{R23}$:
	\begin{equation}
		F_{R23}(x) = 1 + (\lambda - 1) \left( \frac{x_0}{x} \right)^{\alpha} - \lambda (1 + \eta) \left( \frac{x_0}{x} \right)^{\alpha} + \lambda \eta \left( \frac{x_0}{x} \right)^{\alpha}
	\end{equation}
	and
	\begin{equation}
		f_{R23}(x) = \frac{\alpha x_0^{\alpha}}{x^{\alpha + 1}} \left[ (1 - \lambda) + 2\lambda (1 + \eta) \left( \frac{x_0}{x} \right)^{\alpha} - 3 \lambda \eta \left( \frac{x_0}{x} \right)^{2\alpha} \right].
	\end{equation}
\end{corollary}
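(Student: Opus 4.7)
The plan is to apply the generalized cubic transmuted Pareto formulas from the Proposition, namely Equations \eqref{eq:cdf-CTP} and \eqref{eq:pdf-CTP}, directly with the specified values $\delta_1 = 1 + \lambda - \lambda\eta$ and $\delta_2 = 2\lambda\eta - \lambda$. So the proof is essentially a substitution-and-simplification exercise, but it should be preceded by a short sanity check that this choice of $(\delta_1,\delta_2)$ indeed encodes the $CT_{R23}$ family of \citet{rahman-et-al-2023}.

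First I would rewrite the $CT_{R23}$ cdf \eqref{eq:cdf-r23} in the canonical form of Definition \ref{def:CT}, identifying the coefficient of $G(x)$ as $\delta_1 = 1-\lambda(\eta-1) = 1+\lambda-\lambda\eta$ and the coefficient of $G^2(x)$ as $\delta_2 = \lambda(2\eta-1) = 2\lambda\eta-\lambda$. A quick check then gives $\delta_3 = 1-\delta_1-\delta_2 = -\lambda\eta$, which matches the coefficient of $G^3(x)$ in \eqref{eq:cdf-r23}. This justifies that the substitution is the right one to recover $CTP_{R23}$.

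Next I would plug these expressions into the three coefficients appearing in \eqref{eq:cdf-CTP}, namely $2\delta_1+\delta_2-3$, $3-3\delta_1-2\delta_2$, and $\delta_1+\delta_2-1$. A direct expansion yields $\lambda-1$, $-\lambda(1+\eta)$, and $\lambda\eta$ respectively, which gives the claimed cdf (once the two higher powers of $x_0/x$ are restored, since the stated Equation seems to carry a typo in its exponents). The same substitution in the three bracketed coefficients of \eqref{eq:pdf-CTP}, namely $3-2\delta_1-\delta_2$, $6\delta_1+4\delta_2-6$, and $3-3\delta_1-3\delta_2$, yields $1-\lambda$, $2\lambda(1+\eta)$, and $-3\lambda\eta$, producing the claimed pdf.

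There is no real obstacle here: the proof is a linear algebraic simplification, and the only thing to be careful about is tracking signs correctly when expanding products like $-\lambda(\eta-1)$ and $\lambda(2\eta-1)$. The heaviest bookkeeping appears in the coefficient $6\delta_1+4\delta_2-6$, where several $\lambda\eta$ terms cancel to leave $2\lambda(1+\eta)$; I would perform that step explicitly and let the remaining coefficients follow the same pattern.
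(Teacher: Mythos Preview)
Your proposal is correct and matches the paper's approach: the corollary is presented without proof as an immediate specialization of Equations \eqref{eq:cdf-CTP} and \eqref{eq:pdf-CTP}, exactly via the substitution-and-simplification you describe (mirroring what Remark \ref{rem:rem-CTP1} does for the previously known cases). Your preliminary sanity check against \eqref{eq:cdf-r23} and your observation about the missing $2\alpha$ and $3\alpha$ exponents in the displayed cdf are both apt additions.
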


\section{Moments and other results}
\label{sec4}

\subsection{Moments}

\begin{theorem}
	Let $X$ be a random variable following the CTP distribution. For all $k \in \entiers^*$, the $k^{\text{th}}$ moment of $X$ is given by 
	\begin{equation} 
		\label{eq:mom-CTP}
		\esp(X^k) = \alpha x_0^k \left[ \frac{-\delta_1 k^2 + (5\delta_1 + 2\delta_2) k\alpha - 6\alpha^2}{(k-\alpha)(k-2\alpha)(k-3\alpha)} \right], \quad \alpha > k.
	\end{equation}
	The mean and variance are respectively given by 
	\begin{equation} 
		\esp(X) = \alpha x_0 \left[ \frac{-\delta_1 + (5\delta_1 + 2\delta_2) \alpha - 6\alpha^2}{(1-\alpha)(1-2\alpha)(1-3\alpha)} \right], \quad \alpha > 1
	\end{equation}
	and
	\begin{multline} 
		\var(X) = \alpha x_0^2 \Bigg[ \frac{-4\delta_1 + (10\delta_1 + 4\delta_2) \alpha - 6\alpha^2}{(2-\alpha)(2-2\alpha)(2-3\alpha)} \\
		- \alpha \left( \frac{-\delta_1 + (5\delta_1 + 2\delta_2) \alpha - 6\alpha^2}{(1-\alpha)(1-2\alpha)(1-3\alpha)} \right)^2 \Bigg], \qquad \alpha > 2
	\end{multline}
\end{theorem}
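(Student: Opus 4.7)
The plan is to compute $\esp(X^k)$ directly from its definition
\begin{equation*}
\esp(X^k) = \int_{x_0}^{\infty} x^k f(x) \, dx,
\end{equation*}
using the explicit pdf \eqref{eq:pdf-CTP}. Because the bracketed factor in \eqref{eq:pdf-CTP} is a linear combination of $1$, $(x_0/x)^{\alpha}$ and $(x_0/x)^{2\alpha}$, the integrand naturally splits into three monomial terms of the form $c_m \cdot \alpha x_0^{m\alpha} \cdot x^{k - m\alpha - 1}$ for $m = 1, 2, 3$, where $c_1 = 3 - 2\delta_1 - \delta_2$, $c_2 = 6\delta_1 + 4\delta_2 - 6$ and $c_3 = 3 - 3\delta_1 - 3\delta_2$ are read off from \eqref{eq:pdf-CTP}. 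This reduces the problem to three one-line Riemann integrals plus an algebraic consolidation.

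Each elementary integral evaluates as
\begin{equation*}
\int_{x_0}^{\infty} \alpha x_0^{m\alpha} \, x^{k - m\alpha - 1} \, dx = \frac{\alpha x_0^k}{m\alpha - k},
\end{equation*}
provided $k - m\alpha - 1 < -1$, that is $\alpha > k/m$; the most restrictive case $m = 1$ supplies the standing assumption $\alpha > k$. Summing the three weighted contributions yields the intermediate expression
\begin{equation*}
\esp(X^k) = \alpha x_0^k \left[ \frac{c_1}{\alpha - k} + \frac{c_2}{2\alpha - k} + \frac{c_3}{3\alpha - k} \right],
\end{equation*}
which is not yet in the compact form \eqref{eq:mom-CTP}.

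The main (and essentially only nontrivial) obstacle is the algebraic simplification of this sum into a single rational function. Placing the three fractions over the common denominator $(\alpha - k)(2\alpha - k)(3\alpha - k)$ and expanding the quadratic factors $(2\alpha-k)(3\alpha-k)$, $(\alpha-k)(3\alpha-k)$, $(\alpha-k)(2\alpha-k)$ weighted by $c_1, c_2, c_3$ respectively, one collects terms by powers of $k$. I expect (and will verify) that the coefficient of $\alpha^2$ collapses to $6$ with all $\delta_i$-contributions cancelling, the coefficient of $k\alpha$ collapses to $-(5\delta_1 + 2\delta_2)$, and the coefficient of $k^2$ collapses to $\delta_1$. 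Using the sign identity $(k-\alpha)(k-2\alpha)(k-3\alpha) = -(\alpha-k)(2\alpha-k)(3\alpha-k)$ then converts this rational expression into exactly the form \eqref{eq:mom-CTP}.

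Finally, the mean formula is immediate by specializing $k = 1$. The variance follows from the classical identity $\var(X) = \esp(X^2) - \esp(X)^2$ applied with $k = 1$ and $k = 2$; the convergence condition for $\esp(X^2)$ accounts for the stronger requirement $\alpha > 2$ appearing in the variance formula.
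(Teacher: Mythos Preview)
Your proposal is correct and follows essentially the same approach as the paper: direct integration of $x^k f(x)$, splitting into the three terms coming from \eqref{eq:pdf-CTP}, evaluating each power integral under the condition $\alpha>k$, and then consolidating over the common denominator $(k-\alpha)(k-2\alpha)(k-3\alpha)$ before reading off the mean and variance from $k=1,2$. The only cosmetic difference is that the paper carries the denominators as $k-\alpha$, $k-2\alpha$, $k-3\alpha$ from the outset rather than introducing your sign identity at the end.
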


\begin{proof}
	By definition, the moment of order $k$ is defined as
	\begin{equation} 
		\esp(X^k) = \int_{-\infty}^{\infty} x^k f(x) \dx.
	\end{equation}
	By applying this formula to the pdf $f(x)$ of any CTP distribution defined by Equation \eqref{eq:pdf-CTP}, we have:
	\begin{align*}
		\esp(X^k) & = \int_{x_0}^{\infty} x^k \frac{\alpha x_0^{\alpha}}{x^{\alpha+1}} \left[ (3 - 2\delta_1 - \delta_2) + (6\delta_1 + 4\delta_2 - 6) \left( \frac{x_0}{x} \right)^{\alpha} + (3 - 3\delta_1 - 3\delta_2) \left( \frac{x_0}{x} \right)^{2\alpha} \right] \dx \\
		& = (3 - 2\delta_1 - \delta_2) \int_{x_0}^{\infty} \frac{\alpha x_0^{\alpha}}{x^{\alpha+1-k}} \dx + (6\delta_1 + 4\delta_2 - 6) \int_{x_0}^{\infty} \frac{\alpha x_0^{2\alpha}}{x^{2\alpha+1-k}} \dx \\
		& \qquad + (3 - 3\delta_1 - 3\delta_2) \int_{x_0}^{\infty} \frac{\alpha x_0^{3\alpha}}{x^{3\alpha+1-k}} \dx \\
		& = \alpha x_0^{\alpha} (3 - 2\delta_1 - \delta_2) \left[ -\frac{1}{(\alpha-k)x^{\alpha-k}} \right]_{x_0}^{\infty} + \alpha x_0^{2\alpha} (6\delta_1 + 4\delta_2 - 6) \left[ -\frac{1}{(2\alpha-k) x^{2\alpha-k}} \right]_{x_0}^{\infty} \\
		& \qquad + \alpha x_0^{3\alpha} (3 - 3\delta_1 - 3\delta_2) \left[ -\frac{1}{(3\alpha-k) x^{3\alpha-k}} \right]_{x_0}^{\infty}.
	\end{align*}
	The limits in $\infty$ involved are finite (and zero) if we simultaneously have $\alpha > k$, $2\alpha > k$ and $3\alpha > k$. Since $\alpha > 0$, this is equivalent to $\alpha > k$. So, if $\alpha > k$,
	\begin{align*}
		\esp(X^k) & = \frac{\alpha x_0^{\alpha} (3 - 2\delta_1 - \delta_2)}{(\alpha-k) x_0^{\alpha-k}} + \frac{\alpha x_0^{2\alpha} (6\delta_1 + 4\delta_2 - 6)}{(2\alpha-k) x_0^{2\alpha-k}} + \frac{\alpha x_0^{3\alpha} (3 - 3\delta_1 - 3\delta_2)}{(3\alpha-k) x_0^{3\alpha-k}} \\
		& = \alpha x_0^k \left[ \frac{2\delta_1 + \delta_2 - 3}{k-\alpha} + \frac{6 - 6\delta_1 - 4\delta_2}{k-2\alpha} + \frac{3\delta_1 + 3\delta_2 - 3}{k-3\alpha} \right] \\
		& = \alpha x_0^k \left[ \frac{(2\delta_1 + \delta_2 - 3) (k-2\alpha) (k-3\alpha) + (6 - 6\delta_1 - 4\delta_2) (k-\alpha)(k-3\alpha)}{(k-\alpha)(k-2\alpha)(k-3\alpha)} \right. \\
		& \qquad\qquad \left. + \frac{(3\delta_1 + 3\delta_2 - 3) (k-\alpha) (k-2\alpha)}{(k-\alpha)(k-2\alpha)(k-3\alpha)} \right] \\
		& = \alpha x_0^k \left[ \frac{(2\delta_1 + \delta_2 - 3) (k^2 - 5k\alpha + 6\alpha^2) + (6 - 6\delta_1 - 4\delta_2) (k^2 - 4k\alpha + 3\alpha^2)}{(k-\alpha)(k-2\alpha)(k-3\alpha)} \right. \\
		& \qquad\qquad \left. + \frac{(3\delta_1 + 3\delta_2 - 3) (k^2 - 3k\alpha + 2\alpha^2)}{(k-\alpha)(k-2\alpha)(k-3\alpha)} \right] \\
		& = \alpha x_0^k \left[ \frac{(2\delta_1 + \delta_2 - 3) k^2 + (-10\delta_1 - 5\delta_2 + 15) k\alpha + (12\delta_1 + 6\delta_2 - 18) \alpha^2}{(k-\alpha)(k-2\alpha)(k-3\alpha)} \right. \\
		& \qquad\qquad + \frac{(6 - 6\delta_1 - 4\delta_2) k^2 + (-24 + 24\delta_1 + 16\delta_2) k\alpha + (18 - 18\delta_1 - 12\delta_2) \alpha^2}{(k-\alpha)(k-2\alpha)(k-3\alpha)} \\
		& \qquad\qquad \left. + \frac{(3\delta_1 + 3\delta_2 - 3) k^2 + (-9\delta_1 - 9\delta_2 + 9) k\alpha + (6\delta_1 + 6\delta_2 - 6) \alpha^2}{(k-\alpha)(k-2\alpha)(k-3\alpha)} \right] \\
		& = \alpha x_0^k \left[ \frac{-\delta_1 k^2 + (5\delta_1 + 2\delta_2) k\alpha - 6\alpha^2}{(k-\alpha)(k-2\alpha)(k-3\alpha)} \right].
	\end{align*}
	The mathematical expectation $\esp(X)$ and the variance $\var(X)$ are easily deduced by applying Equation \eqref{eq:mom-CTP} for $k=1$, $k=2$ and by using the relation $\var(X) = \esp(X^2) - \left[ \esp(X) \right]^2$.
\end{proof}

\begin{remark}
	By applying Equation \eqref{eq:mom-CTP} for given values of $\delta_1$ and $\delta_2$, we get the moments of the different cubic transmuted Pareto distributions. 
	\begin{list}{$\bullet$}{}		
		\item Setting $\delta_1 = 1+\lambda$ and $\delta_2 = -2\lambda$, we get the moment of order $k$ of $CTP_{A}$ \citep{ansari-eledum-2018} and its modified version $CTP_{MA}$:
		\begin{equation} 
			\esp(X^k) = \alpha x_0^k \left[ \frac{-(1+\lambda) k^2 + (5+\lambda) k\alpha - 6\alpha^2}{(k-\alpha)(k-2\alpha)(k-3\alpha)} \right], \quad \alpha > k.
		\end{equation}
		The result is identical to that found in \cite[Proposition 3.1, p. 447]{ansari-eledum-2018} for $CTP_{A}$ by considering the notations of these latter authors.
		
		\item Setting $\delta_1 = 1+\lambda_1$ and $\delta_2 = \lambda_2 - \lambda_1$, we get the moment of order $k$ of the $CTP_{R18a}$ \citep{rahman-et-al-2020a} and its modified version:
		\begin{equation} 
			\esp(X^k) = \alpha x_0^k \left[ \frac{-(1+\lambda_1) k^2 + (5 + 3\lambda_1 + 2\lambda_2) k\alpha - 6\alpha^2}{(k-\alpha)(k-2\alpha)(k-3\alpha)} \right], \quad \alpha > k.
		\end{equation}
		The result is identical to that found in \cite[Theorem 1]{rahman-et-al-2020a} for $CTP_{R18a}$.
		
		\item Setting $\delta_1 = 1 + \lambda_1 + \lambda_2$ and $\delta_2 = -\lambda_1 - 2\lambda_2$, we get the moment of order $k$ of the $CTP_{R18b}$ \citep{rahman-et-al-2021} and its modified version: 
		\begin{equation} 
			\esp(X^k) = \alpha x_0^k \left[ \frac{-(1 + \lambda_1 + \lambda_2) k^2 + (5 + 3\lambda_1 + \lambda_2) k\alpha - 6\alpha^2}{(k-\alpha)(k-2\alpha)(k-3\alpha)} \right], \quad \alpha > k.
		\end{equation}
		The result is identical to that found in \cite[Theorem 1, p. 645]{rahman-et-al-2021} for $CTP_{R18b}$.
	\end{list}
\end{remark}

The following corollaries provide the $k^{\text{th}}$ order moments of the CTP, which are novel and have not been previously addressed in the literature.

\begin{corollary}
Setting $\delta_1 = \lambda_1$ and $\delta_2 = \lambda_2 - \lambda_1$, we obtain the moment of order $k$ of both $CTP_{G}$ and $CTP_{MG}$ as: 
		\begin{equation} 
			\esp(X^k) = \alpha x_0^k \left[ \frac{-\lambda_1 k^2 + (3\lambda_1 + 2\lambda_2) k\alpha - 6\alpha^2}{(k-\alpha)(k-2\alpha)(k-3\alpha)} \right], \quad \alpha > k.
		\end{equation}
\end{corollary}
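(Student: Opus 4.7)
The plan is to apply the general moment formula \eqref{eq:mom-CTP} from the preceding theorem directly, since the corollary only concerns the specialization $\delta_1 = \lambda_1$, $\delta_2 = \lambda_2 - \lambda_1$. No new integration or analytic argument is needed: the $k$-th moment formula is already valid for any admissible pair $(\delta_1,\delta_2)$, and the claim is that substituting the prescribed values yields the stated expression.

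Concretely, I would first recall that Equation \eqref{eq:mom-CTP} gives
\[
\esp(X^k) = \alpha x_0^k \left[ \frac{-\delta_1 k^2 + (5\delta_1 + 2\delta_2) k\alpha - 6\alpha^2}{(k-\alpha)(k-2\alpha)(k-3\alpha)} \right], \quad \alpha > k,
\]
and then substitute $\delta_1 = \lambda_1$ and $\delta_2 = \lambda_2 - \lambda_1$ into both coefficients in the numerator. The first coefficient becomes $-\lambda_1 k^2$ immediately, and the second simplifies via $5\lambda_1 + 2(\lambda_2 - \lambda_1) = 3\lambda_1 + 2\lambda_2$. The denominator and the constant term $-6\alpha^2$ are unchanged, yielding the stated formula.

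The only tiny point to mention is the justification that these values of $(\delta_1, \delta_2)$ indeed correspond to the $CTP_G$ and $CTP_{MG}$ distributions: this is precisely the substitution used in the earlier corollary that derived the cdf and pdf of $CTP_G$ and $CTP_{MG}$ from the generalized formulas \eqref{eq:cdf-CTP} and \eqref{eq:pdf-CTP}. Since the moment theorem was established for the generic pdf \eqref{eq:pdf-CTP}, the result transfers verbatim. There is no genuine obstacle in this proof — it is a routine substitution — and the entire argument should fit in one or two lines of algebra.
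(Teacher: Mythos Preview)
Your proposal is correct and matches the paper's approach exactly: the corollary is stated without a separate proof in the paper, as it follows by direct substitution of $\delta_1=\lambda_1$ and $\delta_2=\lambda_2-\lambda_1$ into the general moment formula \eqref{eq:mom-CTP}, with the only computation being $5\delta_1+2\delta_2=5\lambda_1+2(\lambda_2-\lambda_1)=3\lambda_1+2\lambda_2$.
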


\begin{corollary} 
Setting $\delta_1 = 1-\lambda$ and $\delta_2 = 3\lambda$, we obtain the $k^{\text{th}}$ order moment of both $CTP_{R19}$ and $CTP_{MR19}$:
		\begin{equation} 
			\esp(X^k) = \alpha x_0^k \left[ \frac{(\lambda-1) k^2 + (5 + \lambda) k\alpha - 6\alpha^2}{(k-\alpha)(k-2\alpha)(k-3\alpha)} \right], \quad \alpha > k.
		\end{equation}
\end{corollary}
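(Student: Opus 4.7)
The plan is to derive the stated formula by direct specialization of the general moment identity \eqref{eq:mom-CTP} established in the preceding theorem, applied to the specific parameter choice identifying the $CTP_{R19}$ family (and, with an extended parameter range, the $CTP_{MR19}$ family) within the unified CTP framework. Since the pdf of $CTP_{MR19}$ has exactly the same functional form as that of $CTP_{R19}$, the moment computation is literally identical for both; only the admissible set of $\lambda$ differs.

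First, I would recall from the corollary in Section \ref{sec:CTP} that both $CTP_{R19}$ and $CTP_{MR19}$ correspond to the choice $\delta_1 = 1 - \lambda$ and $\delta_2 = 3\lambda$ in the generalized CTP pdf \eqref{eq:pdf-CTP}. Second, I would substitute these values into the numerator appearing in \eqref{eq:mom-CTP}: the coefficient of $k^2$ becomes $-\delta_1 = -(1-\lambda) = \lambda - 1$, while the coefficient of $k\alpha$ becomes $5\delta_1 + 2\delta_2 = 5(1-\lambda) + 6\lambda = 5 + \lambda$. The $-6\alpha^2$ term and the denominator $(k-\alpha)(k-2\alpha)(k-3\alpha)$ are independent of $(\delta_1,\delta_2)$ and so remain unchanged. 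Combining these three pieces immediately yields
\begin{equation*}
\esp(X^k) = \alpha x_0^k \left[ \frac{(\lambda-1) k^2 + (5 + \lambda) k\alpha - 6\alpha^2}{(k-\alpha)(k-2\alpha)(k-3\alpha)} \right],
\end{equation*}
which is the claimed expression.

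Finally, I would briefly justify the restriction $\alpha > k$: this condition is inherited verbatim from the master theorem, where it arose as the joint requirement $\alpha > k$, $2\alpha > k$, $3\alpha > k$ ensuring convergence of the three Pareto-type integrals at infinity, and nothing in the specialization can weaken or strengthen it.

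The main obstacle is essentially nonexistent: the argument is a one-line plug-in once the general moment theorem is in place, and the only genuine check is the elementary algebraic simplification of the coefficients, together with noting that $CTP_{R19}$ and $CTP_{MR19}$ are covered simultaneously because they share pdf form.
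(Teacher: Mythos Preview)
Your proposal is correct and follows exactly the approach the paper intends: the corollary is stated without proof because it is obtained by direct substitution of $\delta_1 = 1-\lambda$ and $\delta_2 = 3\lambda$ into the general moment formula \eqref{eq:mom-CTP}, and your algebraic checks of the two coefficients are accurate.
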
	
	
\begin{corollary} 
 Setting $\delta_1 = 1 + \lambda - \lambda \eta$ and $\delta_2 = 2 \lambda \eta - \lambda$, we obtain the $k^{\text{th}}$ order moment of the $CTP_{R23}$:
		\begin{equation} 
			\esp(X^k) = \alpha x_0^k \left[ \frac{-(1 + \lambda - \lambda \eta) k^2 + (5 + 3\lambda - \lambda \eta) k\alpha - 6\alpha^2}{(k-\alpha)(k-2\alpha)(k-3\alpha)} \right], \quad \alpha > k.
		\end{equation}
\end{corollary}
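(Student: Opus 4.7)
The plan is to derive this corollary as a direct specialization of the general moment formula \eqref{eq:mom-CTP} established in the preceding theorem, so no new integration work is needed. The proposal is simply to substitute the reparametrization $\delta_1 = 1 + \lambda - \lambda\eta$ and $\delta_2 = 2\lambda\eta - \lambda$ (which is exactly the mapping that recovered the $CTP_{R23}$ cdf and pdf in the corresponding earlier corollary) into the bracketed expression
$$\frac{-\delta_1 k^2 + (5\delta_1 + 2\delta_2) k\alpha - 6\alpha^2}{(k-\alpha)(k-2\alpha)(k-3\alpha)}.$$

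The only nontrivial calculation is simplifying the coefficient $5\delta_1 + 2\delta_2$. I would compute it as
$$5\delta_1 + 2\delta_2 = 5(1 + \lambda - \lambda\eta) + 2(2\lambda\eta - \lambda) = 5 + 5\lambda - 5\lambda\eta + 4\lambda\eta - 2\lambda = 5 + 3\lambda - \lambda\eta,$$
while the $k^2$ coefficient is literally $-\delta_1 = -(1 + \lambda - \lambda\eta)$, and the $\alpha^2$ term $-6\alpha^2$ is unchanged. Plugging these back into \eqref{eq:mom-CTP} yields precisely the claimed expression for $\esp(X^k)$, with the same validity condition $\alpha > k$ inherited directly from the theorem.

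The main (and only) obstacle is purely algebraic bookkeeping: being careful with the signs when expanding $5(1+\lambda-\lambda\eta) + 2(2\lambda\eta-\lambda)$ so that the $\lambda\eta$ terms combine correctly as $-5\lambda\eta + 4\lambda\eta = -\lambda\eta$. Once this coefficient check is done, invoking Theorem stated above closes the proof in a single line, with no need to re-derive the integrals $\int_{x_0}^{\infty} x^{k-j\alpha-1}\,dx$ for $j=1,2,3$, since those computations are already embedded in \eqref{eq:mom-CTP}.
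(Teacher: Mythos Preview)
Your proposal is correct and is exactly the approach implied by the paper: the corollary is stated without proof as an immediate specialization of \eqref{eq:mom-CTP}, obtained by substituting $\delta_1 = 1+\lambda-\lambda\eta$ and $\delta_2 = 2\lambda\eta-\lambda$. Your computation of $5\delta_1 + 2\delta_2 = 5 + 3\lambda - \lambda\eta$ is the only thing to check, and it is right.
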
		
		
\subsection{Moment generating function}

\begin{theorem}
	Let $X$ be a random variable following the CTP distribution. The moment generating function is defined for all $t \in \mathbb{R}$ by
	\begin{equation} 
		\label{eq:mgf-CTP}
		M_X(t) = \sum_{k=0}^{\infty} \frac{\alpha t^k x_0^k \left[ -\delta_1 k^2 + (5\delta_1 + 2\delta_2) k\alpha - 6\alpha^2 \right]}{k! \, (k-\alpha)(k-2\alpha)(k-3\alpha)},
	\end{equation}
	where $\alpha>k$.
\end{theorem}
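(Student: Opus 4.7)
The plan is to derive the moment generating function directly from the closed-form moment formula proved in the previous theorem. Starting from the definition $M_X(t) = \esp(e^{tX})$, I would expand the exponential as the Taylor series $e^{tX} = \sum_{k=0}^{\infty} (tX)^k/k!$ and exchange the series with the expectation. This yields the standard representation
\begin{equation*}
	M_X(t) = \sum_{k=0}^{\infty} \frac{t^k}{k!}\, \esp(X^k).
\end{equation*}

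The core of the proof is then a one-line substitution of the expression for $\esp(X^k)$ given in Equation \eqref{eq:mom-CTP}. Since the quadratic in $k$ in the numerator, $-\delta_1 k^2 + (5\delta_1 + 2\delta_2) k \alpha - 6\alpha^2$, does not share roots with the three linear factors $(k-\alpha)$, $(k-2\alpha)$, $(k-3\alpha)$ in the denominator for generic $\delta_1,\delta_2,\alpha$, no cancellation occurs and the announced formula \eqref{eq:mgf-CTP} is obtained without further manipulation. I would only note in passing that, by specialising $(\delta_1,\delta_2)$ as in the remarks of Section \ref{sec:CTP}, the MGFs of the individual cubic transmuted Pareto families $CTP_A$, $CTP_{R18a}$, $CTP_{R18b}$, and the three new families $CTP_G$, $CTP_{R19}$, $CTP_{R23}$ follow as immediate corollaries, in the same spirit as for the moments.

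The main delicate point, and what I would spend care on rather than gloss over, concerns the status of the series. The Pareto distribution has polynomially decaying tails, so for a fixed shape parameter $\alpha > 0$ only moments of order $k < \alpha$ are finite, and the MGF of a Pareto random variable does not exist in any open neighbourhood of $t=0$. Consequently, the qualifier ``$\alpha > k$'' attached to the statement must be read as a per-term validity condition and the identity \eqref{eq:mgf-CTP} understood as a formal power series in $t$, each coefficient being meaningful exactly when $k < \alpha$. I would make this interpretation explicit at the end of the proof, consistently with the convention adopted in \citet{ansari-eledum-2018}, \citet{rahman-et-al-2020a} and \citet{rahman-et-al-2021}, so that the Fubini-type interchange of sum and expectation is invoked only at the formal level and no spurious convergence claim is made.
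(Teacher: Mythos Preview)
Your proposal is correct and follows essentially the same approach as the paper: expand $e^{tX}$ as a Taylor series, interchange sum and expectation, and substitute the moment formula \eqref{eq:mom-CTP}. Your additional discussion of the formal nature of the series and the per-term validity condition $\alpha>k$ is more careful than the paper's own treatment, which performs the interchange without comment, but this is an elaboration rather than a different method.
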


\begin{proof}
	The moment generating function of a v.a $X$ is given by
	$$
	M_X(t) = \esp \left( \e^{tX} \right) = \int_0^{\infty} \e^{tx} f(x) \dx.
	$$
	From the power series expansion of $\e^{tx}$, we have
	\begin{align*} 
		M_X(t) & = \int_0^{\infty} \left( \sum_{k=0}^{\infty} \frac{(tx)^k}{k!} \right) f(x) \dx \\
		& = \int_0^{\infty} \left( f(x) + tx f(x) + \frac{t^2 x^2}{2!} f(x) + \cdots + \frac{t^n x^n}{n!} f(x) + \cdots \right) \dx \\
		& = \sum_{k=0}^{\infty} \frac{t^k \esp(X^k)}{k!}.
	\end{align*}
\end{proof}

\subsection{Characteristic function}

\begin{proposition}
	Let $X$ be a random variable following the CTP distribution. The characteristic function of $X$ is defined for all $t \in \mathbb{R}$ par : 
	\begin{equation} 
		\label{eq:cf-CTP}
		\phi_X(t) = \sum_{k=0}^{\infty} \frac{\alpha (it)^k x_0^k \left[ -\delta_1 k^2 + (5\delta_1 + 2\delta_2) k\alpha - 6\alpha^2 \right]}{k! \, (k-\alpha)(k-2\alpha)(k-3\alpha)}, \quad \alpha>k.
	\end{equation}
\end{proposition}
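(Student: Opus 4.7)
The plan is to proceed in almost exact parallel with the proof of the moment generating function formula already given in the excerpt, since the characteristic function differs from the mgf only by replacing $t$ with $it$ inside the exponential. I would start from the defining integral
\begin{equation*}
\phi_X(t) = \esp\!\left(\e^{itX}\right) = \int_{x_0}^{\infty} \e^{itx} f(x) \dx,
\end{equation*}
where $f(x)$ is the pdf of the CTP distribution given in Equation \eqref{eq:pdf-CTP}. Then I would insert the power-series expansion $\e^{itx} = \sum_{k=0}^{\infty} (itx)^k / k!$ into the integrand.

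The next step is to interchange the summation and the integration so as to recognize moments of $X$ in each term. This is the only delicate point: one needs a dominated-convergence / Fubini-type justification. Since $|\e^{itx}| = 1$ and $f$ is a pdf, the full integrand is absolutely integrable, so $\phi_X(t)$ itself is well defined; for the term-by-term exchange, it suffices to note that for $\alpha>k$ the moments $\esp(X^k)$ established in Theorem on moments are finite, and the absolute values of the partial sums are bounded by the mgf-type integrand, which converges in the regime $\alpha>k$ used in the statement. Once the interchange is justified I immediately obtain
\begin{equation*}
\phi_X(t) = \sum_{k=0}^{\infty} \frac{(it)^k}{k!} \int_{x_0}^{\infty} x^k f(x) \dx = \sum_{k=0}^{\infty} \frac{(it)^k \, \esp(X^k)}{k!}.
\end{equation*}

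The last step is purely substitutional: I plug in the closed form of $\esp(X^k)$ from Equation \eqref{eq:mom-CTP}, namely
\begin{equation*}
\esp(X^k) = \alpha x_0^k \, \frac{-\delta_1 k^2 + (5\delta_1+2\delta_2)k\alpha - 6\alpha^2}{(k-\alpha)(k-2\alpha)(k-3\alpha)},
\end{equation*}
and recombine the factor $\alpha x_0^k$ with $(it)^k/k!$ to recover exactly the series \eqref{eq:cf-CTP}. The main obstacle is thus not any computational difficulty but the careful handling of the sum–integral interchange and the accompanying bookkeeping of the $\alpha>k$ condition inherited from the moment formula; everything else reduces to the same algebra already carried out in the mgf proof.
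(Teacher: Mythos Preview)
Your proposal is correct and is essentially the same approach as the paper's, which simply states that replacing $t$ by $it$ in the mgf formula \eqref{eq:mgf-CTP} yields \eqref{eq:cf-CTP}. You have merely unpacked that one-line substitution by rerunning the mgf argument with $it$ in place of $t$ and adding a remark on the sum--integral interchange that the paper itself does not discuss.
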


\begin{proof} 
	By replacing $t$ in Equation \eqref{eq:mgf-CTP} with $it$, we obtain Equation \eqref{eq:cf-CTP}.
\end{proof}

\subsection{Reliability and hazard functions}

\begin{proposition}
	The reliability and hazard functions of CTP distribution are respectively given for all $t>0$ by
	\begin{equation}
		R(t) = (3 - 2\delta_1 - \delta_2) \left( \frac{x_0}{x} \right)^{\alpha} + (3\delta_1 + 2\delta_2 - 3) \left( \frac{x_0}{x} \right)^{2\alpha} + (1 - \delta_1 - \delta_2) \left( \frac{x_0}{x} \right)^{3\alpha}
	\end{equation}
	and
	\begin{equation}
		h(t) = \frac{\alpha}{x} \left[ \dfrac{(3 - 2\delta_1 - \delta_2) + (6\delta_1 + 4\delta_2 - 6) \left( \dfrac{x_0}{x} \right)^{\alpha} + (3 - 3\delta_1 - 3\delta_2) \left( \dfrac{x_0}{x} \right)^{2\alpha}}{(3 - 2\delta_1 - \delta_2) + (3\delta_1 + 2\delta_2 - 3) \left( \dfrac{x_0}{x} \right)^{\alpha} + (1 - \delta_1 - \delta_2) \left( \dfrac{x_0}{x} \right)^{2\alpha}} \right].
	\end{equation}
\end{proposition}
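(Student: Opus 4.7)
The plan is to use the two standard identities $R(x) = 1 - F(x)$ and $h(x) = f(x)/R(x)$, combined with the closed-form expressions for $F(x)$ in Equation \eqref{eq:cdf-CTP} and $f(x)$ in Equation \eqref{eq:pdf-CTP} that were established in the generalized CTP proposition. The whole argument is algebraic: no integration, no limiting process, just substitution and cancellation.

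First I would derive $R(x)$. Substituting Equation \eqref{eq:cdf-CTP} into $R(x) = 1 - F(x)$, the constant $1$ cancels and each of the three coefficients reverses sign. The coefficient of $(x_0/x)^{\alpha}$ becomes $-(2\delta_1 + \delta_2 - 3) = 3 - 2\delta_1 - \delta_2$, the coefficient of $(x_0/x)^{2\alpha}$ becomes $-(3 - 3\delta_1 - 2\delta_2) = 3\delta_1 + 2\delta_2 - 3$, and the coefficient of $(x_0/x)^{3\alpha}$ becomes $-(\delta_1 + \delta_2 - 1) = 1 - \delta_1 - \delta_2$, which matches the claimed expression.

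Next I would form the hazard rate $h(x) = f(x)/R(x)$. The key observation is that both $f(x)$ and $R(x)$ contain a common factor of $(x_0/x)^{\alpha}$. Indeed, writing $f(x) = (\alpha/x) (x_0/x)^{\alpha} Q_1\bigl((x_0/x)^{\alpha}\bigr)$ and $R(x) = (x_0/x)^{\alpha} Q_2\bigl((x_0/x)^{\alpha}\bigr)$, where $Q_1$ and $Q_2$ are the quadratic polynomials in $(x_0/x)^{\alpha}$ appearing, respectively, in the bracket of Equation \eqref{eq:pdf-CTP} and in the just-derived expression for $R(x)$ after factoring, the common factor $(x_0/x)^{\alpha}$ cancels. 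What remains is precisely the claimed ratio multiplied by $\alpha/x$.

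There is essentially no obstacle here; the proof is a routine bookkeeping exercise. The only point worth mentioning explicitly is that the division in $h(x) = f(x)/R(x)$ is legitimate on the support $x \geqslant x_0$, since $F$ is a genuine cdf satisfying conditions \eqref{eq:def-cdf-a}--\eqref{eq:def-cdf-c} under the constraints \eqref{eq:delta-1}--\eqref{eq:delta-2}, so $R(x) > 0$ on $[x_0, \infty)$ wherever $f(x) > 0$. Everything else is direct substitution and simplification.
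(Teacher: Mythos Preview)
Your proposal is correct and follows exactly the same approach as the paper: apply the definitions $R(t)=1-F(t)$ and $h(t)=f(t)/R(t)$ to the already-established expressions \eqref{eq:cdf-CTP} and \eqref{eq:pdf-CTP}. In fact your write-up is more detailed than the paper's own proof, which simply recalls these two identities and leaves the algebra to the reader.
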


\begin{proof}
	The proof is simple recalling that for all $t>0$, the reliability function is given by $R(t) = 1-F(t)$ and the hazard function is given by 
	$$
	h(t) = \frac{f(t)}{1-F(t)} = \frac{f(t)}{R(t)}.
	$$
\end{proof}

\section{Parameter estimation}
\label{sec5}

Let $x_1, \ldots, x_n$ be a random sample of size $n$ from the Cubic Transmuted Pareto defined by pdf \eqref{eq:pdf-CTP}. The general form of the log-likelihood of a CTP is
\begin{equation}
	\ell = \sum_{i=1}^n \log f(x_i)
\end{equation}
which can be rewritten as
\begin{multline}
	\ell = n \log(\alpha) + n\alpha \log(x_0) - (\alpha+1) \sum_{i=1}^n \log(x_i) \\
	+ \sum_{i=1}^n \log \left[ (3 - 2\delta_1 - \delta_2) + (6\delta_1 + 4\delta_2 - 6) \left( \frac{x_0}{x_i} \right)^{\alpha} + (3 - 3\delta_1 - 3\delta_2) \left( \frac{x_0}{x_i} \right)^{2\alpha} \right].
\end{multline}

Since for all $i=1,\ldots,n$, $x_0 \leqslant x_i$, the maximum likelihood estimator (MLE) of $x_0$ is the minimum of the sample values i.e. $\min_{1 \leqslant i \leqslant n} x_i$. The MLEs of the other parameters are obtained by maximizing the respective log-likelihoods of the different models defined below.

\begin{list}{$\bullet$}{}
	\item The log-likelihood corresponding to both the $CTP_G$ and the $CTP_{MG}$ is
	\begin{multline} 
		\ell_G(\alpha, \lambda_1, \lambda_2) = n \log(\alpha) + n\alpha \log(x_0) - (\alpha+1) \sum_{i=1}^n \log(x_i) \\
		+ \sum_{i=1}^n \log \left[ (3 - \lambda_1 - \lambda_2) + (2\lambda_1 + 4\lambda_2 - 6) \left( \frac{x_0}{x_i} \right)^{\alpha} + (3 - 3\lambda_2) \left( \frac{x_0}{x_i} \right)^{2\alpha} \right].
	\end{multline}
	The maximum likelihood estimate (MLE) $(\hat\alpha, \hat\lambda_1, \hat\lambda_2)$ of $(\alpha, \lambda_1, \lambda_2)$ is the solution to the constrained optimization problem
	\begin{equation}
		(\hat\alpha, \hat\lambda_1, \hat\lambda_2) = \argmax_{(\alpha, \lambda_1, \lambda_2) \in \reels_+^* \times \mathscr{S}} \ell_G(\alpha, \lambda_1, \lambda_2),
	\end{equation}
	where $\mathscr{S} = \mathscr{S}_{G}$ if we consider the $CTP_G$ and $\mathscr{S} = \mathscr{S}_{MG}$ if we consider the $CTP_{MG}$.
	
	\item The log-likelihood corresponding to both the $CTP_A$ of \citet{ansari-eledum-2018} and the $CTP_{MA}$ is
	\begin{multline}
		\ell_A(\alpha, \lambda) = n \log(\alpha) + n\alpha \log(x_0) - (\alpha+1) \sum_{i=1}^n \log(x_i) \\
		+ \sum_{i=1}^n \log \left[ 1 - 2\lambda \left( \frac{x_0}{x_i} \right)^{\alpha} + 3\lambda \left( \frac{x_0}{x_i} \right)^{2\alpha} \right].
	\end{multline}
	The MLE $(\hat\alpha, \hat\lambda)$ of $(\alpha,\lambda)$ is the solution to the constrained optimization problem
	\begin{equation}
		(\hat\alpha, \hat\lambda) = \argmax_{(\alpha, \lambda) \in \reels_+^* \times \mathscr{S}} \ell_A(\alpha, \lambda),
	\end{equation}
	where $\mathscr{S} = [-1,1]$ if we consider the $CTP_A$ and $\mathscr{S} = [-1,3]$ if we consider the $CTP_{MA}$.
	
	\item The log-likelihood corresponding to both $CTP_{R18a}$ \citep{rahman-et-al-2020a} and $CTP_{MR18a}$ is
	\begin{multline}
		\ell_{R18a}(\alpha, \lambda_1, \lambda_2) = n \log(\alpha) + n\alpha \log(x_0) - (\alpha+1) \sum_{i=1}^n \log(x_i) \\
		+ \sum_{i=1}^n \log \left[ (1 - \lambda_1 - \lambda_2) + 2(\lambda_1 + 2\lambda_2) \left( \frac{x_0}{x_i} \right)^{\alpha} - 3\lambda_2 \left( \frac{x_0}{x_i} \right)^{2\alpha} \right].
	\end{multline}
	The MLE $(\hat\alpha, \hat\lambda_1, \hat\lambda_2)$ of $(\alpha, \lambda_1, \lambda_2)$ is the solution to the constrained optimization problem
	\begin{equation}
		(\hat\alpha, \hat\lambda_1, \hat\lambda_2) = \argmax_{(\alpha, \lambda_1, \lambda_2) \in \reels_+^* \times \mathscr{S}} \ell_{R18a}(\alpha, \lambda_1, \lambda_2),
	\end{equation}
	where $\mathscr{S} = \mathscr{S}_{R18a}$ if we consider the $CTP_{R18a}$ and $\mathscr{S} = \mathscr{S}_{MR18a}$ if we consider the $CTP_{MR18a}$.
	
	\item The log-likelihood corresponding to both $CTP_{R18b}$ of \citet{rahman-et-al-2021} and $CTP_{MR18b}$ is 
	\begin{multline}
		\ell_{R18b}(\alpha, \lambda_1, \lambda_2) = n \log(\alpha) + n\alpha \log(x_0) - (\alpha+1) \sum_{i=1}^n \log(x_i) \\
		+ \sum_{i=1}^n \log \left[ (1 - \lambda_1) + 2(\lambda_1 - \lambda_2) \left( \frac{x_0}{x_i} \right)^{\alpha} + 3\lambda_2 \left( \frac{x_0}{x_i} \right)^{2\alpha} \right].
	\end{multline}
	The MLE $(\hat\alpha, \hat\lambda_1, \hat\lambda_2)$ of $(\alpha, \lambda_1, \lambda_2)$ is the solution to the constrained optimization problem
	\begin{equation}
		(\hat\alpha, \hat\lambda_1, \hat\lambda_2) = \argmax_{(\alpha, \lambda_1, \lambda_2) \in \reels_+^* \times \mathscr{S}} \ell_{R18b}(\alpha, \lambda_1, \lambda_2),
	\end{equation}
	where $\mathscr{S} = \mathscr{S}_{R18b}$ if we consider the $CTP_{R18b}$ and $\mathscr{S} = \mathscr{S}_{MR18b}$ if we consider the $CTP_{MR18b}$.
	
	\item The log-likelihood corresponding to both $CTP_{R19}$ and $CTP_{MR19}$ is
	\begin{multline}
		\ell_{R19}(\alpha, \lambda) = n \log(\alpha) + n\alpha \log(x_0) - (\alpha+1) \sum_{i=1}^n \log(x_i) \\
		+ \sum_{i=1}^n \log \left[ (1-\lambda) + 6\lambda \left( \frac{x_0}{x_i} \right)^{\alpha} - 6\lambda \left( \frac{x_0}{x_i} \right)^{2\alpha} \right].
	\end{multline}
	The MLE $(\hat\alpha, \hat\lambda)$ of $(\alpha, \lambda)$ is the solution to the constrained optimization problem
	\begin{equation}
		(\hat\alpha, \hat\lambda) = \argmax_{(\alpha, \lambda) \in \reels_+^* \times \mathscr{S}} \ell_{R19}(\alpha, \lambda),
	\end{equation}
	where $\mathscr{S} = [-1,1]$ if we consider the $CTP_{R19}$ and $\mathscr{S} = [-2,1]$ if we consider the $CTP_{MR19}$.
	
	\item The log-likelihood corresponding to $CTP_{R23}$ is
	\begin{multline}
		\ell_{R23}(\alpha, \lambda, \eta) = n \log(\alpha) + n\alpha \log(x_0) - (\alpha+1) \sum_{i=1}^n \log(x_i) \\
		+ \sum_{i=1}^n \log \left[ (1 - \lambda) + 2\lambda (1 + \eta) \left( \frac{x_0}{x_i} \right)^{\alpha} - 3 \lambda \eta \left( \frac{x_0}{x_i} \right)^{2\alpha} \right].
	\end{multline}
	The MLE $(\hat\alpha, \hat\lambda, \hat\eta)$ of $(\alpha, \lambda, \eta)$ is the solution to the constrained optimization problem
	\begin{equation}
		(\hat\alpha, \hat\lambda, \hat\eta) = \argmax_{(\alpha, \lambda, \eta) \in \reels_+^* \times [-1,1] \times [0,2]} \ell_{R23}(\alpha, \lambda, \eta).
	\end{equation}
\end{list}

All these constrained optimization problems are difficult to solve analytically and therefore require the use of a numerical optimization algorithm. One key requirement for such algorithms is their ability to handle inequality constraints. In the remainder of this paper, we will use the R function \textbf{constrOptim}, which includes optimization algorithms that incorporate inequality constraints.

\section{Real data analysis}
\label{sec:real-data}

\subsection{Wheaton river dataset}
These data represent the peak flood exceedances (in $\text{m}^3$/s) of the Wheaton River (Canada). They consist of 72 excesses recorded between 1958 and 1984, rounded to the first decimal place. The data, originally provided by \citet{choulakian-stephens-2001}, have been used in studies by \citet{merovci-puka-2014}, \citet{ansari-eledum-2018}, and \citet{rahman-et-al-2020a}. Descriptive statistics for the Wheaton River dataset are presented in Table \ref{tab:wheaton-stats}. 

\begin{table}[!h]
	\centering
	\caption{Descriptive statistics for the Wheaton river dataset}
	\label{tab:wheaton-stats}
	\begin{tabularx}{0.8\textwidth}{*{6}{Y}}
		\hline Min & $Q_1$ & Median & Mean & $Q_3$ & Max \\
		\hline 0.1 & 2.125 & 9.5 & 12.204 & 20.125 & 64 \\
		\hline 
	\end{tabularx}
\end{table} 

\subsubsection{Fitting with the initial (unmodified) distributions}

Table \ref{tab:wheaton-results-1} gives the estimated parameters and Table \ref{tab:wheaton-comp-1} gives the different criteria for unmodified models.

\begin{table}[!h]
	\centering
	\caption{Estimated parameters for unmodified models for the Wheaton river dataset}
	\label{tab:wheaton-results-1}
	\begin{tabularx}{0.8\textwidth}{cYYYY}
		\hline Distributions & \multicolumn{4}{c}{Estimations} \\ 
		\hline $CTP_{G}$ & $x_0 = 0.1$ & $\hat\alpha = 0.48$ & $\hat\lambda_1 = 0.059$ & $\hat\lambda_2 = -1$ \\
		$CTP_{A}$ & $x_0 = 0.1$ & $\hat\alpha = 0.256$ & $\hat\lambda = -0.934$ & \\
		$CTP_{R18a}$ & $x_0 = 0.1$ & $\hat\alpha = 0.426$ & $\hat\lambda_1 = -0.95$ & $\hat\lambda_2 = -1$ \\
		$CTP_{R18b}$ & $x_0 = 0.1$ & $\hat\alpha = 0.355$ & $\hat\lambda_1 = -1$ & $\hat\lambda_2 = 0.077$ \\
		$CTP_{R19}$ & $x_0 = 0.1$ & $\hat\alpha = 0.2$ & $\hat\lambda = 0.949$ & \\
		$CTP_{R23}$ & $x_0 = 0.1$ & $\hat\alpha = 0.198$ & $\hat\lambda = 1$ & $\hat\eta = 1.918$ \\
		TP & $x_0 = 0.1$ & $\hat\alpha = 0.35$ & $\hat\lambda = -0.952$ & \\
		Pareto & $x_0 = 0.1$ & $\hat\alpha = 0.244$ & & \\
		\hline 
	\end{tabularx}
\end{table} 

\begin{table}[!h]
	\centering
	\caption{Criteria for unmodified models on the Wheaton river dataset (ranks in parentheses)}
	\label{tab:wheaton-comp-1}
	\begin{tabularx}{0.8\textwidth}{cYYYY}
		\hline Distributions & $-\log L^{*}$ & AIC & AICC & BIC  \\
		\hline  
		$CTP_{G}$    & $267.716^{(1)}$ & $541.432^{(1)}$ & $541.785^{(1)}$ & $548.262^{(1)}$ \\
		$CTP_{R18a}$ & $276.901^{(2)}$ & $559.802^{(2)}$ & $560.155^{(2)}$ & $566.632^{(2)}$ \\
		$CTP_{R23}$  & $284.811^{(3)}$ & $575.622^{(4)}$ & $575.975^{(4)}$ & $582.452^{(5)}$ \\
		$CTP_{R19}$  & $285.291^{(4)}$ & $574.582^{(3)}$ & $574.756^{(3)}$ & $579.135^{(3)}$ \\
		$CTP_{R18b}$ & $285.722^{(5)}$ & $577.444^{(6)}$ & $577.797^{(6)}$ & $584.274^{(6)}$ \\
		TP           & $286.201^{(6)}$ & $576.402^{(5)}$ & $576.576^{(5)}$ & $580.955^{(4)}$ \\
		$CTP_{A}$    & $289.828^{(7)}$ & $583.656^{(7)}$ & $583.830^{(7)}$ & $588.209^{(7)}$ \\
		Pareto       & $303.064^{(8)}$ & $608.128^{(8)}$ & $608.185^{(8)}$ & $610.405^{(8)}$ \\ 
		\hline
	\end{tabularx}
\end{table}
According to the results in Table \ref{tab:wheaton-comp-1}, the distribution that best fits the Wheaton River dataset appears to be $CTP_G$. However, as shown in Equation \eqref{eq:range-granz-b}, the $CTP_G$ distribution is not well-defined and does not constitute a valid probability distribution for certain values of $(\lambda_1, \lambda_2) \in \mathscr{S}_G = [0, 1] \times [-1, 1]$. In such cases, the $CTP_G$ cannot be considered a probability distribution in the strict sense. Figure \ref{fig:wheaton-granz} displays the fitted cdf (on the left) and pdf (on the right) from the unmodified Granzotto cubic transmutation ($CTP_G$) for the Wheaton dataset, over the range $x \in [0.1, 0.6]$. We observe that the cdf is negative in the interval $[0.1146777, 0.3690646]$, and the pdf is negative in the interval $[0.1067108, 0.2248255]$.

\begin{figure}[!h]
	\centering
	\includegraphics[scale=0.35]{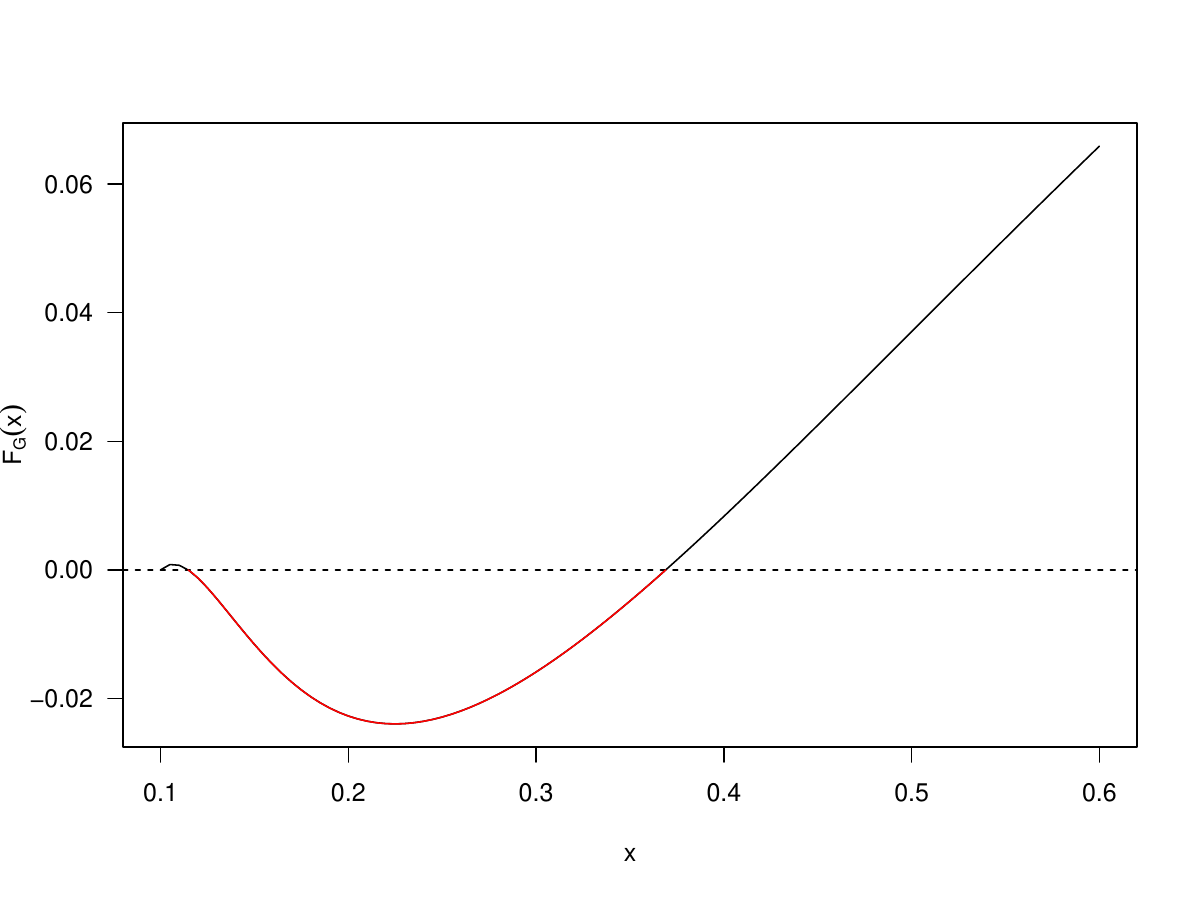}
	\includegraphics[scale=0.35]{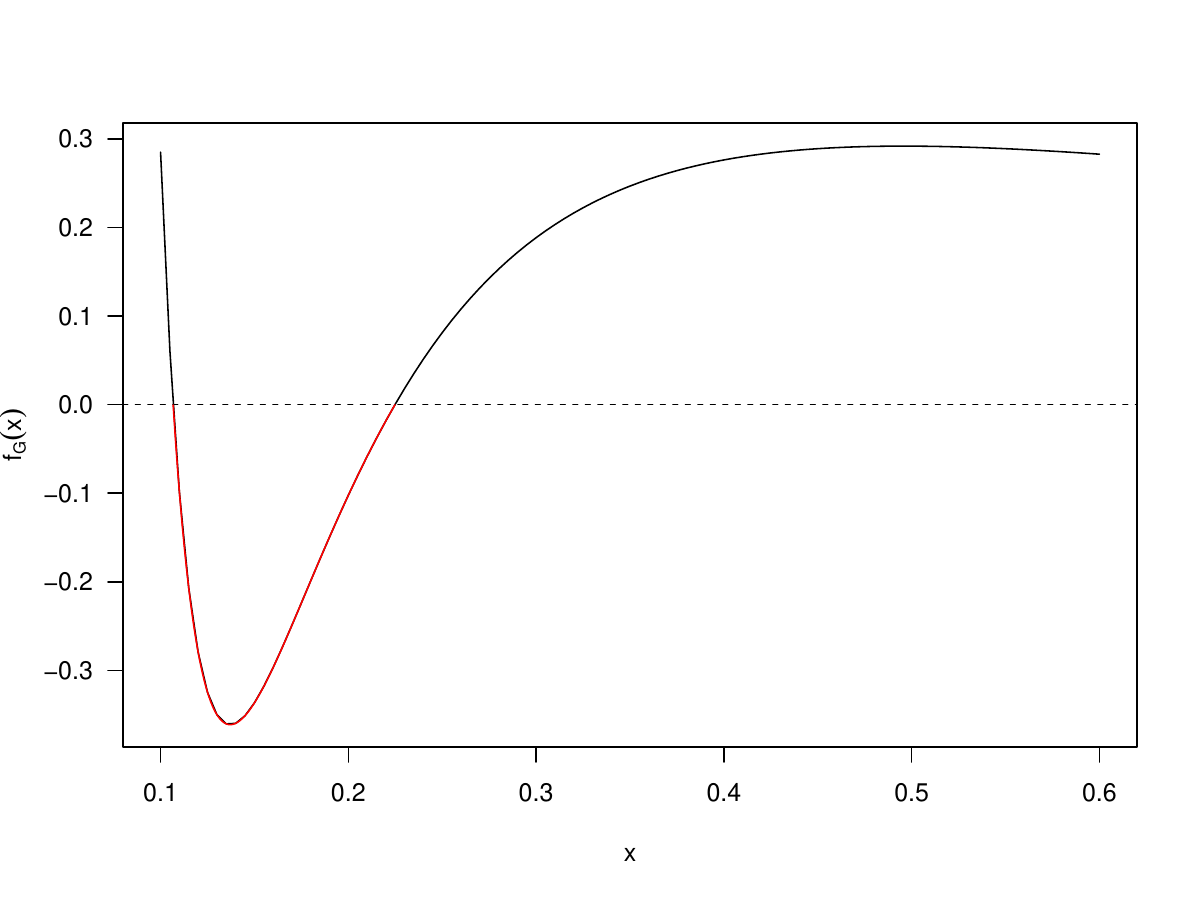}
	\caption{The fitted cdf (left) and pdf (right) from unmodified Granzotto cubic transmutation ($CTP_G$) for Wheaton data}
	\label{fig:wheaton-granz}
\end{figure}

\subsubsection{Fitting with modified distributions}

Table \ref{tab:wheaton-results-2} gives the estimated parameters from the modified models while Table \ref{tab:wheaton-comp-2} gives the different criteria for the modified models.

\begin{table}[!h]
	\centering
	\caption{Estimated parameters for modified models on the Wheaton dataset}
	\label{tab:wheaton-results-2}
	\begin{tabularx}{0.8\textwidth}{cYYcY}
		\hline Distributions & \multicolumn{4}{c}{Estimations} \\ 
		\hline $CTP_{MG}$ & $x_0 = 0.1$ & $\hat\alpha = 0.426$ & $\hat\lambda_1 = 0.05$ & $\hat\lambda_2 = 0$ \\  
		$CTP_{MA}$ & $x_0 = 0.1$ & $\hat\alpha = 0.256$ & $\hat\lambda = -0.934$ & \\              
		$CTP_{MR18a}$ & $x_0 = 0.1$ & $\hat\alpha = 0.426$ & $\hat\lambda_1 = -0.95$ & $\hat\lambda_2 = -1$ \\
		$CTP_{MR18b}$ & $x_0 = 0.1$ & $\hat\alpha = 0.426$ & $\hat\lambda_1 = -1.949$ & $\hat\lambda_2 = 1$ \\
		$CTP_{MR19}$ & $x_0 = 0.1$ & $\hat\alpha = 0.2$ & $\hat\lambda = 0.949$ & \\                   
		$CTP_{M23}$ & $x_0 = 0.1$ & $\hat\alpha = 0.198$ & $\hat\lambda = 1$ & $\hat\eta = 1.918$ \\
		TP & $x_0 = 0.1$ & $\hat\alpha = 0.35$ & $\hat\lambda = -0.952$ & \\
		Pareto & $x_0 = 0.1$ & $\hat\alpha = 0.244$ & & \\
		\hline 
	\end{tabularx}
\end{table} 

\begin{table}[!h]
	\centering
	\caption{Criteria for corrected models on the Wheaton dataset (ranks in parentheses)}
	\label{tab:wheaton-comp-2}
	\begin{tabularx}{0.8\textwidth}{cYYYY}
		\hline Distributions & $-\log L^{*}$ & AIC & AICC & BIC  \\
		\hline  
		$CTP_{MG}$    & $276.901^{(1)}$ & $559.802^{(1)}$ & $560.155^{(1)}$ & $566.632^{(1)}$ \\
		$CTP_{MR18a}$ & $276.901^{(1)}$ & $559.802^{(1)}$ & $560.155^{(1)}$ & $566.632^{(1)}$ \\
		$CTP_{MR18b}$ & $276.901^{(1)}$ & $559.802^{(1)}$ & $560.155^{(1)}$ & $566.632^{(1)}$ \\
		$CTP_{R23}$   & $284.811^{(4)}$ & $575.622^{(5)}$ & $575.975^{(5)}$ & $582.452^{(6)}$ \\
		$CTP_{MR19}$  & $285.291^{(5)}$ & $574.582^{(4)}$ & $574.756^{(4)}$ & $579.135^{(4)}$ \\
		TP            & $286.201^{(6)}$ & $576.402^{(6)}$ & $576.576^{(6)}$ & $580.955^{(5)}$ \\
		$CTP_{MA}$    & $289.828^{(7)}$ & $583.656^{(7)}$ & $583.830^{(7)}$ & $588.209^{(7)}$ \\
		Pareto        & $303.064^{(8)}$ & $608.128^{(8)}$ & $608.185^{(8)}$ & $610.405^{(8)}$ \\
		\hline
	\end{tabularx}
\end{table}


A comparison between Tables \ref{tab:wheaton-results-1} and \ref{tab:wheaton-results-2} shows changes in the parameters of the $CTP_{G}$ and $CTP_{R18b}$ distributions. The best-fitting distributions for the data are now the $CTP_{MG}$, $CTP_{MR18a}$, and $CTP_{MR18b}$. The corrections introduced in this paper have improved the $CTP_{G}$, ensuring that it is well-defined as a probability distribution, and have elevated the $CTP_{R18b}$ from 5th place to a tie for 1st.

Figure \ref{fig:wheaton} presents the histogram of the Wheaton dataset alongside the fitted modified models.

\begin{figure}[!h]
	\centering
	\includegraphics[scale=0.5]{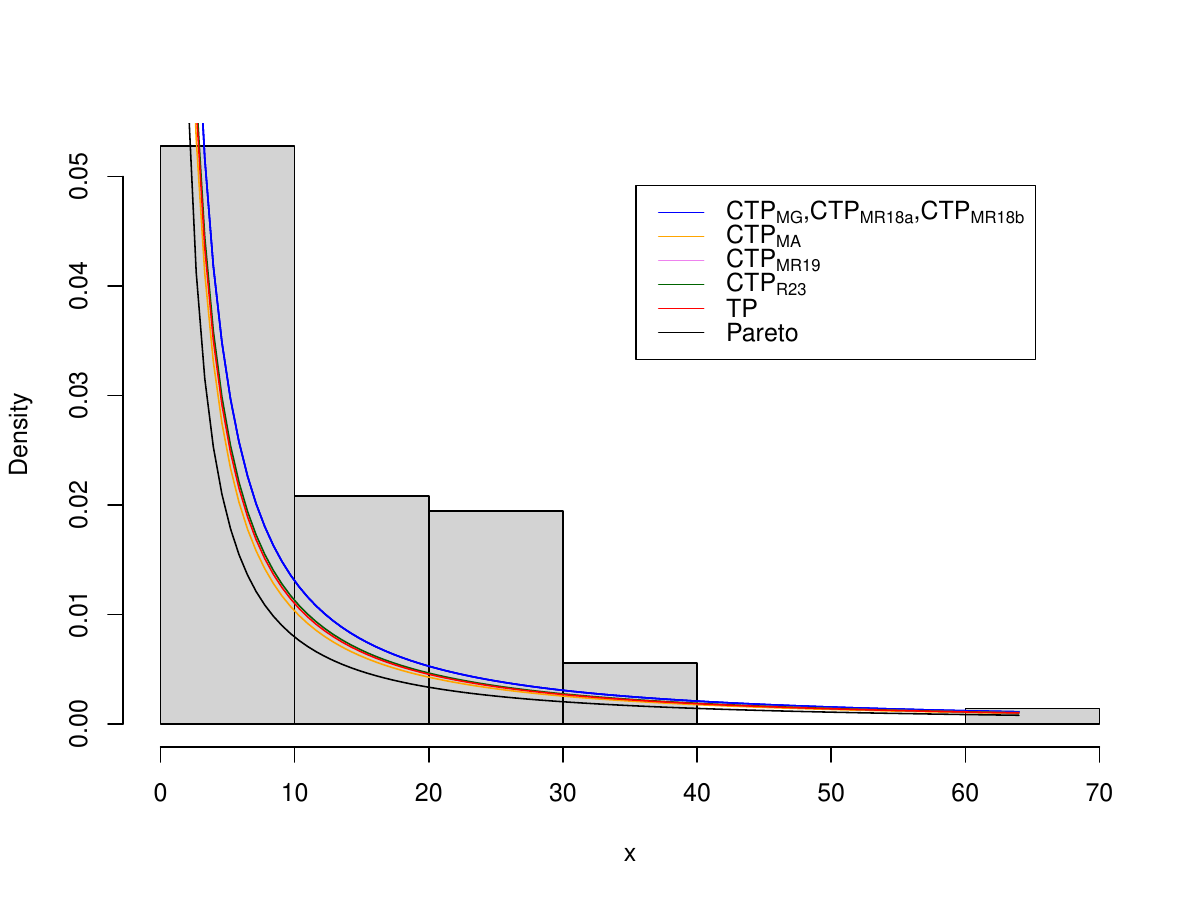}
	\caption{Graphical comparison of modified CTP pdfs fitted to the Wheaton dataset}
	\label{fig:wheaton}
\end{figure} 

\subsection{Norwegian fire insurance data}
%

The dataset consists of fire insurance claims (in 1,000 Norwegian Krone (NOK)) from a Norwegian insurance company over the period 1972–1992. It is available in the R package \textbf{ReIns} by \citet{reynkens-verbelen-2024}. Table \ref{tab:ranks-norwegianfire} presents, for each year, the ranking (from best to worst) of the modified models based on the negative log-likelihood for the fit of the Norwegian fire data.

\begin{table}[!h]
	\small
	\centering
	\caption{Ranking (best to worst) of models based on the opposite of the log-likelihood for Norwegian fire data}
	\label{tab:ranks-norwegianfire}
	\begin{tabularx}{\textwidth}{*{9}{Y}}
		\hline \rotatebox{45}{Year} & \rotatebox{45}{$CTP_{MG}$} & \rotatebox{45}{$CTP_{MA}$} & \rotatebox{45}{$CTP_{MR18a}$} & \rotatebox{45}{$CTP_{MR18b}$} & \rotatebox{45}{$CTP_{MR19}$} & \rotatebox{45}{$CTP_{R23}$} & \rotatebox{45}{TP} & \rotatebox{45}{P} \\ 
		\hline 
		1972 & 1 & 6 & 1 & 1 & 4 & 4 & 7 & 8 \\ 
		1973 & 1 & 6 & 1 & 1 & 5 & 1 & 7 & 8 \\ 
		1974 & 1 & 6 & 1 & 1 & 5 & 1 & 7 & 8 \\ 
		1975 & 1 & 6 & 1 & 1 & 1 & 1 & 7 & 8 \\ 
		1976 & 2 & 6 & 2 & 2 & 5 & 1 & 7 & 8 \\ 
		1977 & 1 & 6 & 1 & 1 & 5 & 1 & 7 & 8 \\ 
		1978 & 2 & 6 & 2 & 2 & 5 & 1 & 7 & 8 \\ 
		1979 & 1 & 7 & 1 & 1 & 6 & 1 & 5 & 8 \\ 
		1980 & 1 & 6 & 1 & 1 & 5 & 1 & 7 & 8 \\ 
		1981 & 1 & 6 & 1 & 1 & 1 & 1 & 7 & 8 \\ 
		1982 & 1 & 6 & 1 & 1 & 7 & 1 & 5 & 8 \\ 
		1983 & 1 & 6 & 1 & 1 & 7 & 4 & 4 & 8 \\ 
		1984 & 1 & 6 & 1 & 1 & 7 & 1 & 5 & 8 \\ 
		1985 & 1 & 6 & 1 & 1 & 5 & 1 & 7 & 8 \\ 
		1986 & 1 & 7 & 1 & 1 & 6 & 4 & 5 & 8 \\ 
		1987 & 1 & 7 & 1 & 1 & 6 & 4 & 5 & 8 \\ 
		1988 & 1 & 6 & 1 & 1 & 7 & 4 & 5 & 8 \\ 
		1989 & 1 & 7 & 1 & 1 & 6 & 4 & 5 & 8 \\ 
		1990 & 1 & 7 & 1 & 1 & 6 & 4 & 5 & 8 \\ 
		1991 & 1 & 7 & 1 & 1 & 6 & 4 & 5 & 8 \\ 
		1992 & 1 & 6 & 1 & 1 & 7 & 4 & 5 & 8 \\ 
		Sum & 23 & 132 & 23 & 23 & 112 & 48 & 124 & 168 \\ 
		\hline 
	\end{tabularx}
\end{table}


We observe that in almost all years, the $CTP_{MG}$, $CTP_{MR18a}$ and $CTP_{MR18b}$  models share the first place. Additionally, all models outperform the Pareto distribution, and in approximately half of the years, the transmuted Pareto distribution performs better than the $CTP_{MR19}$ and $CTP_{MA}$ distributions.

Figure \ref{fig:norwegianfire} illustrates the fitting of the various modified distributions for selected years.

\begin{figure}[p]
	\centering
	\includegraphics[scale=0.4]{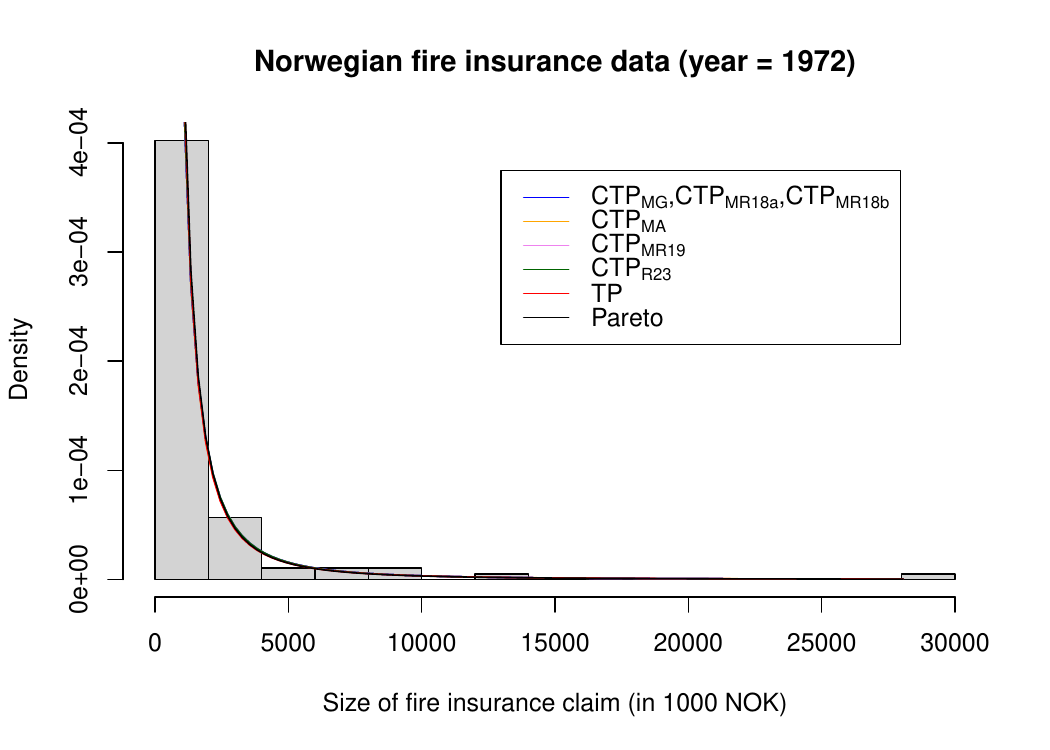}
	\includegraphics[scale=0.4]{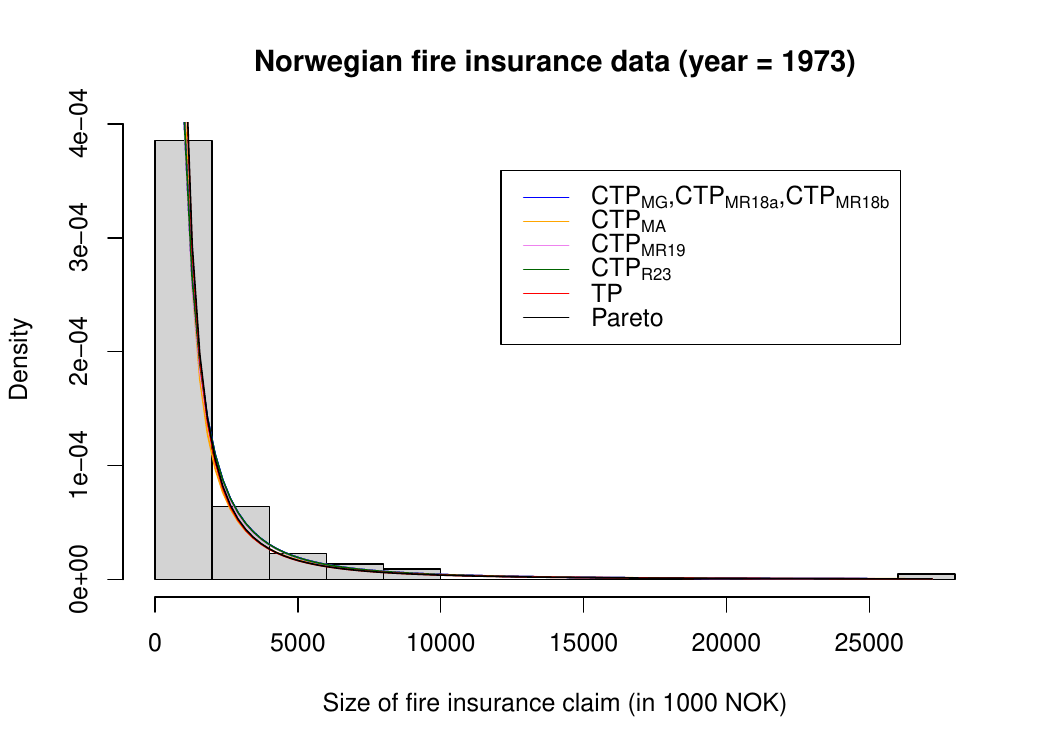}
	\includegraphics[scale=0.4]{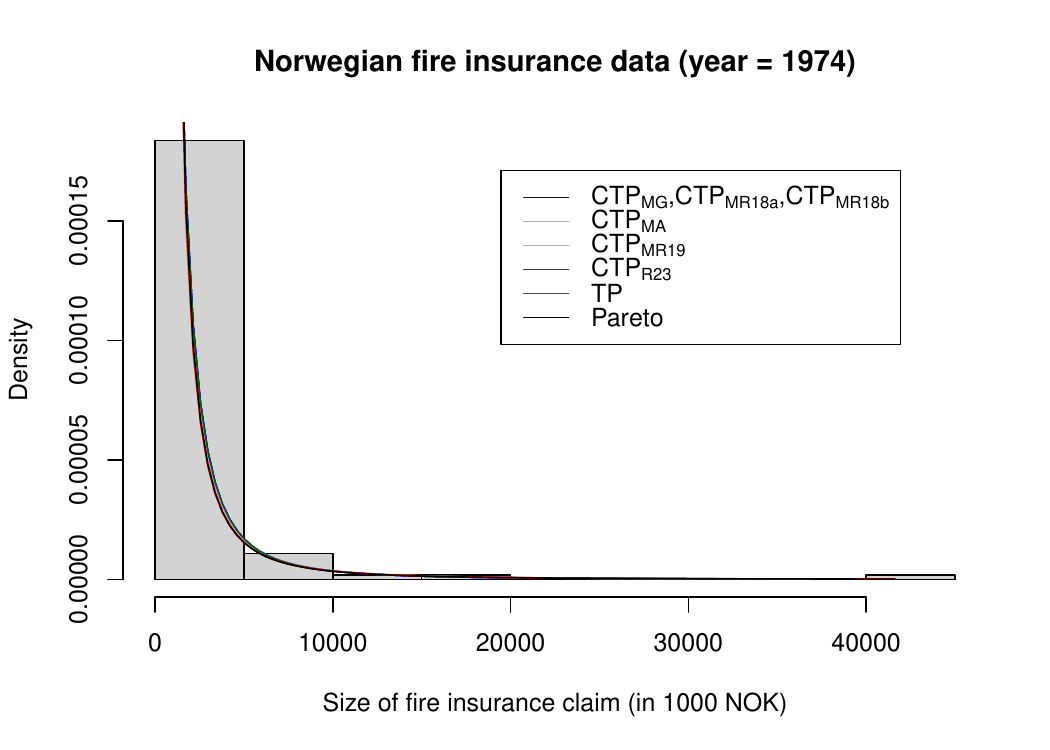}
	\includegraphics[scale=0.4]{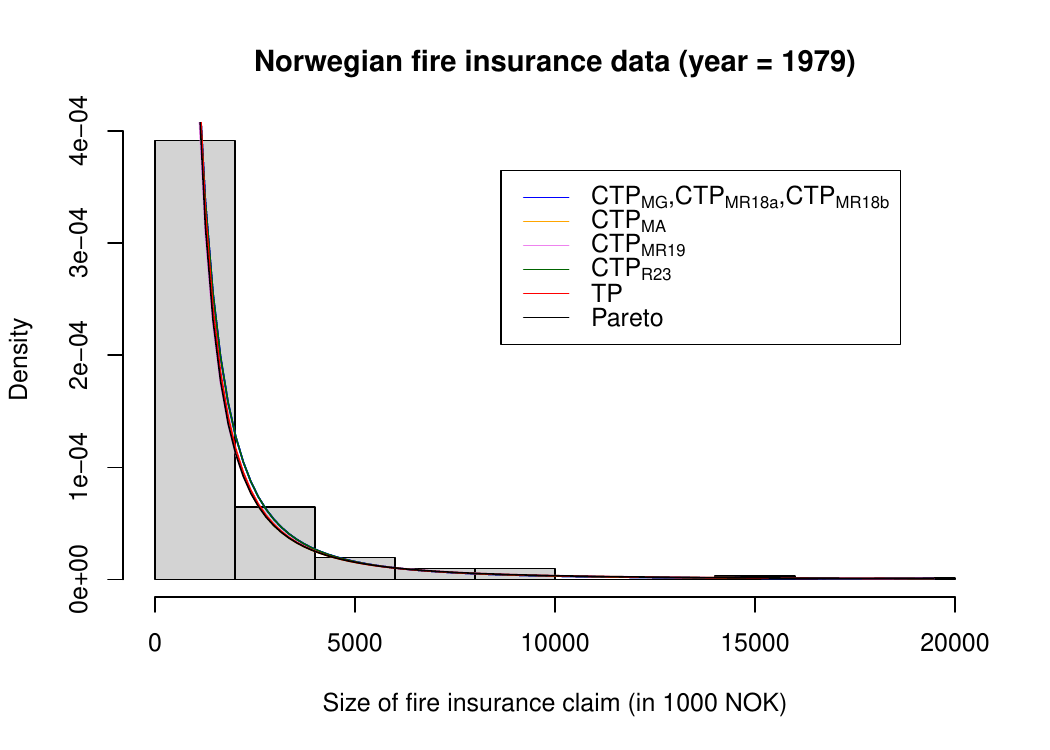}
	\includegraphics[scale=0.4]{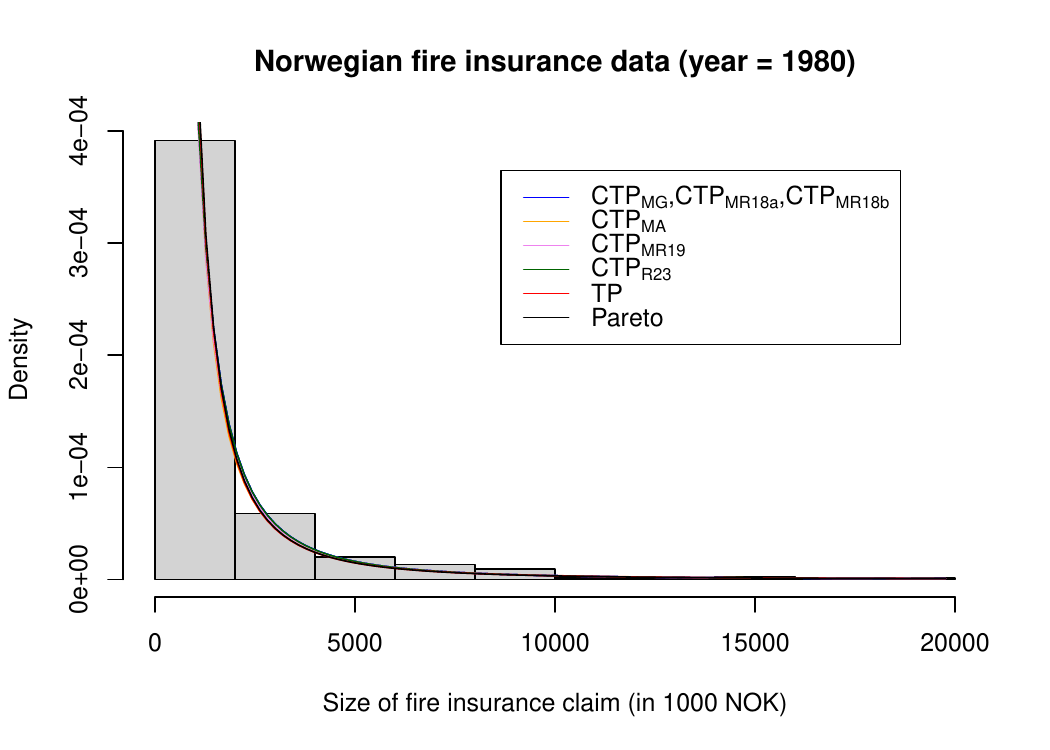}
	\includegraphics[scale=0.4]{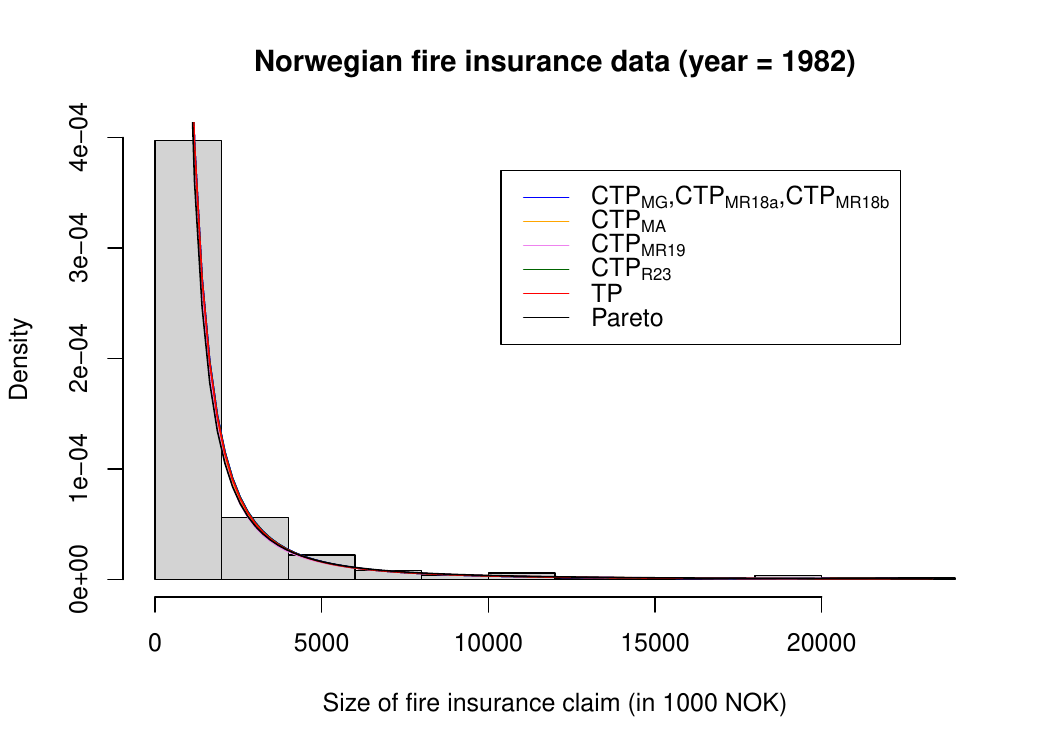}
	\includegraphics[scale=0.4]{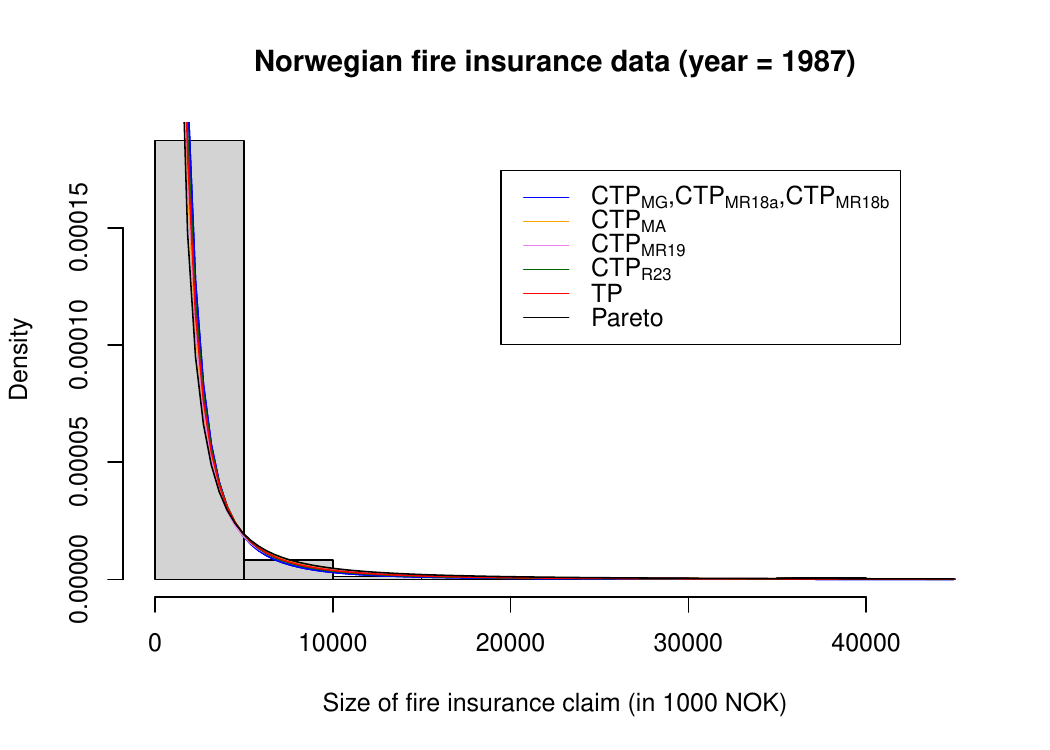}
	\includegraphics[scale=0.4]{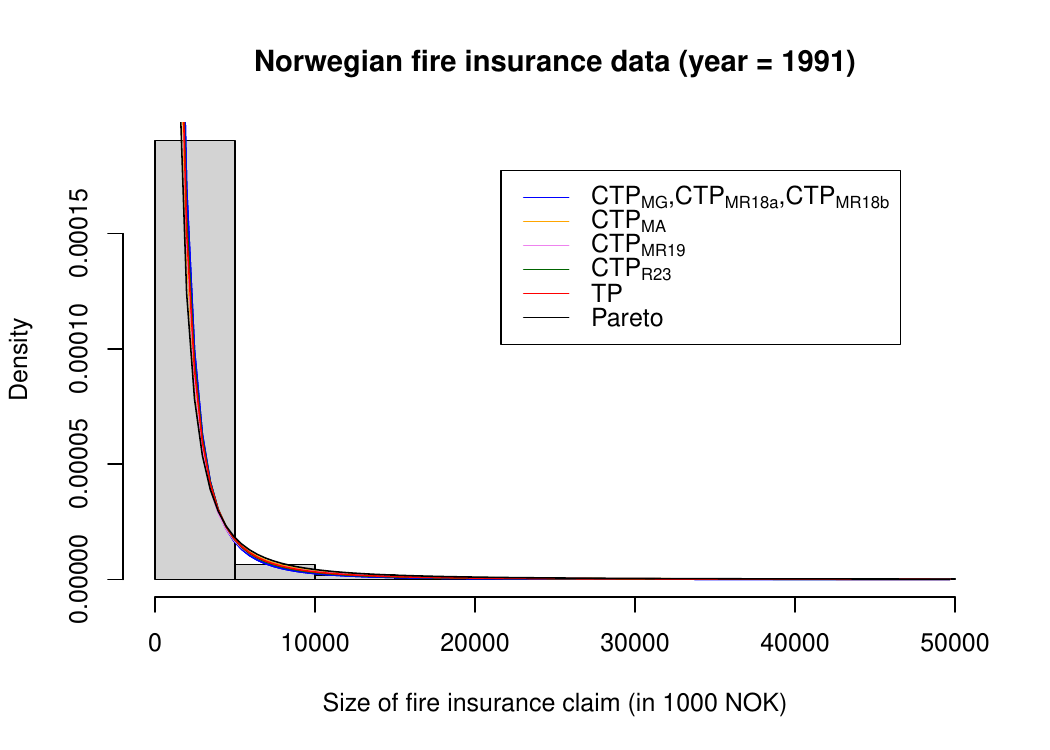}
	\caption{Fitting of different CTP densities to the Norwegian fire insurance data for certain years}
	\label{fig:norwegianfire}
\end{figure}

\section{Conclusion}
\label{sec:conclusion}
%
%

In this paper, we conducted a comprehensive study of six cubic transmutations of the Pareto distribution, introducing a general formulation that unifies and extends both previously studied and novel cubic transmutations. This formulation provides a systematic framework for constructing new transmuted versions while ensuring key probabilistic properties. By analyzing these cubic transmutations theoretically and empirically, we demonstrated their ability to enhance the flexibility of the Pareto distribution, leading to improved fits across diverse real-world datasets, particularly in reliability analysis, economics, and engineering. \\ 

\noindent
Our comparative analysis of the six cubic transmuted Pareto distributions highlighted their respective strengths, with some transmutations offering better fits depending on the dataset. The empirical study, based on the Wheaton River dataset \citep{choulakian-stephens-2001} and Norwegian fire insurance claims \citep{reynkens-verbelen-2024}, showcased the practical relevance of these models. Notably, the refinements proposed by \citet{geraldo-et-al-2025} improved the behavior of some of the transmutations, ensuring they remain well-defined probability distributions and significantly enhancing their performance in capturing extreme values.\\

\noindent
While this work strengthens the theoretical foundation of cubic transmutations, future research could explore further extensions and applications across broader domains. Additionally, investigating the mathematical conditions that guarantee well-defined transmutations, along with a deeper analysis of the trade-off between model complexity and interpretability, would contribute to optimizing their practical use in statistical modeling.

{
	\small 
	\bibliographystyle{apalike}
	\bibliography{paper}
}

\end{document}